\newcounter{Lcount}
\newcommand{\numsquishlist}{
   \begin{list}{\arabic{Lcount}. }
    { \usecounter{Lcount}
 \setlength{\itemsep}{-.1ex}      \setlength{\parsep}{0ex}
      \setlength{\topsep}{0ex}       \setlength{\partopsep}{0ex}
      \setlength{\leftmargin}{1em} \setlength{\labelwidth}{1em}
      \setlength{\labelsep}{0.1em} } }
\newcommand{\numsquishend}{\end{list}}
\newcommand{\squishlist}{
   \begin{list}{$\bullet$}
    { \setlength{\itemsep}{-.1ex}      \setlength{\parsep}{0ex}
      \setlength{\topsep}{0ex}       \setlength{\partopsep}{0ex}
      \setlength{\leftmargin}{.8em} \setlength{\labelwidth}{1em}
      \setlength{\labelsep}{0.5em} } }
\newcommand{\squishend}{\end{list}}
\newcommand{\btcgip}{{\sc b-SCM transitive causal graph inference problem}\xspace}%
\newcommand*{\indep}{%
  \mathbin{%
    \mathpalette{\@indep}{}%
  }%
}
\newcommand*{\nindep}{%
  \mathbin{
    \mathpalette{\@indep}{/}%
  }%
}
\newcommand*{\@indep}[2]{%
  \sbox0{$#1\perp\m@th$}
  \sbox2{$#1=$}
  \sbox4{$#1\vcenter{}$}
  \rlap{\copy0}
  \dimen@=\dimexpr\ht2-\ht4-.2pt\relax
  \kern\dimen@
  \ifx\\#2\\%
  \else
    \hbox to \wd2{\hss$#1#2\m@th$\hss}%
    \kern-\wd2 %
  \fi
  \kern\dimen@
  \copy0 
}
\newenvironment{ColorPar}[1]{%
    \leavevmode\color{#1}\ignorespaces%
}{%
}%
\definecolor{HLcolor}{rgb}{0,0.0,0}
\newcounter{problem}
\newenvironment{problem}[1][htb]
  {
   \begin{algorithm2e}[#1]%
   \SetAlFnt{\small}
    \SetAlCapFnt{\small}
    \SetAlCapNameFnt{\small}
    \SetAlCapHSkip{0pt}
  }{\end{algorithm2e}}
\newtheorem{definition}{Definition}
\newtheorem{theorem}{Theorem}[section]
\newtheorem{principle}[theorem]{Principle}
\newtheorem{proposition}[theorem]{Proposition}
\newenvironment{proof}[1][Proof]{\begin{trivlist}
\item[\hskip \labelsep {\bfseries #1}]}{\end{trivlist}}
\journal{Nuclear Physics B}
\begin{document}

\begin{frontmatter}



\title{Framework for inferring empirical causal graphs from binary data to support multidimensional poverty analysis}


\author{Chainarong~Amornbunchornvej\fnref{label2}}
\author{Navaporn~Surasvadi, Anon~Plangprasopchok, and Suttipong~Thajchayapong}

\address{National Electronics and Computer Technology Center (NECTEC), NSTDA, Pathum Thani, 12120, Thailand}
 \fntext[label2]{Corresponding author, email: chainarong.amo@nectec.or.th}

\begin{abstract}

\begin{ColorPar}{HLcolor}

Poverty is one of the fundamental issues that mankind faces. To solve poverty issues, one needs to know how severe the issue is. The Multidimensional Poverty Index (MPI) is a well-known approach that is used to measure a degree of poverty issues in a given area. To compute MPI, it requires information of MPI indicators, which are \textbf{binary variables} collecting by surveys, that represent different aspects of poverty such as lacking of education, health, living conditions, etc. Inferring impacts of MPI indicators on MPI index can be solved by using traditional regression methods. However, it is not obvious that whether solving one MPI indicator might resolve or cause more issues in other MPI indicators and there is no framework dedicating to infer empirical causal relations among MPI indicators. 

In this work, we propose a framework to infer causal relations on binary variables in poverty surveys. Our approach performed better than baseline methods in simulated datasets that we know ground truth as well as correctly found a causal relation in the Twin births dataset. In Thailand poverty survey dataset, the framework found a causal relation between smoking and alcohol drinking issues.  We provide R CRAN package `BiCausality' that can be used in any binary variables beyond the poverty analysis context.
\end{ColorPar}
\end{abstract}


\begin{highlights}
\item MPI index is a well-known approach that is used to measure a degree of poverty issues in a given area. All MPI indicators that are used to compute MPI index are binary variables. Inferring impacts of MPI indicators on MPI index can be solved using traditional regression methods.
\item However, it is not obvious that whether solving one MPI indicator might cause more issues in other MPI indicators and there is no framework dedicating to infer empirical causal relations among MPI indicators.
\item In this work, we propose a framework to infer causal relations on binary variables in poverty surveys.
\item Our approach performed better than baseline methods in simulated datasets that we know ground truth as well as correctly found a causal relation in the Twin births dataset.
\item In Thailand poverty surveys, the framework found a causal relation between smoking and alcohol drinking issues. 
\item We provide R CRAN package 'BiCausality' that can be used in any binary variables beyond the poverty analysis context.
\end{highlights}

\begin{keyword}


Causal inference \sep Estimation statistics \sep Frequent pattern mining \sep Multidimensional Poverty Index
\end{keyword}

\end{frontmatter}


\section{Introduction}

\begin{framed}
\begin{ColorPar}{HLcolor}
\noindent {\btcgip:} {Binary MPI indicators are linearly associated with MPI index. More positive values in indicators implies more poverty issues, which results in higher MPI index. However, changing one MPI indicator might cause other MPI indicators to change.  {\bf Given binary data of indicators, the goal is to infer $b$-SCM causal relations between variables, which can explain that whether any binary indicator causes other binary indicators to change.}}
\end{ColorPar}
\end{framed}

Poverty is one of the fundamental issues that mankind faces. More than 100 million people are back into the extreme poverty line by living under the 1.25 USD per day during COVID19 pandemic~\cite{su14052497}. Ending poverty in all its forms everywhere has also been recognized as the greatest global challenge in the 2030 Agenda for Sustainable Development.  However, poverty alleviation often requires comprehensive measures depending on the ground-truth realities and the extent of each region’s capability to tackle poverty issues. The first crucial and challenging step is to understand factors associated with poverty, and then to identify the root cause(s) of issues that give rise to poverty.  
\begin{ColorPar}{HLcolor}
One of the well-known measures for poverty is ``Multidimensional Poverty Index (MPI)"~\cite{alkire2010multidimensional,alkire2021global}, which has been proposed for estimating the degree of poverty in specific areas and populations. The MPI measures poverty beyond the aspect of monetary issues by including other factors such as deficiency in health, inadequate education, and truncated standard of living. The principle of MPI allows poverty-related factors and their weights to be adjusted according to the ground-truth realities in each region.  The MPI index is an aggregate of MPI binary indicators that represent different aspect of poverty. Higher MPI implies more severe poverty issues in a given area. The value in MPI binary indicator is one if there is an issue and is zero when there is no issue. Typically, if a specific poverty issue is alleviated, then the corresponding MPI indicator is changed to zero, which makes MPI index has a lower value. 

Despite the usefulness and flexibility of MPI, the focus has been primarily on 1) the degree of poverty from multiple indicators and 2) the contributions of each indicator toward poverty without any information regarding causal relations among indicators; changing one MPI indicators might cause other indicator to changes. In the worst scenario,  solving one MPI indicator might cause other indicators more issues; which results in having higher MPI index.

Since MPI works only on binary data of MPI indicators and there are few studies concerning causal inference among MPI indicators, in this work, we focus on developing the framework to infer binary causal relations among binary variables to answer \btcgip; whether changing one indicator causes others to change.  
\end{ColorPar}

\section{Related works}
\begin{ColorPar}{HLcolor}
The scope of poverty issues is beyond monetary~\cite{alkire2010acute,alkire2021global,amornbunchornvej2021identifying}. Poverty can relate with other factors such as social capital, homogeneity of population, in multiple ways~\cite{doi:10.1080/08913811.2012.684474}.

To solve poverty issues, one needs to know how strong the issues are. Hence, MPI~\cite{alkire2010multidimensional,alkire2021global} was developed to measure the degree of poverty issues. The MPI is one of the well-known tools that supports policy makers (e.g.poverty measure for policy assessment~\cite{alkire2021examining}) to combat poverty in many countries (e.g. South Africa~\cite{rogan2016gender}, China~\cite{wang2022differences}, Iran~\cite{barati2022multidimensional}, Latin America~\cite{pinilla2018reality}).

The MPI index can be measures using binary MPI indicator, which represents different aspects of poverty issues. MPI indicators typically measure factors that might cause poverty. Poverty can be caused by many factors such as health issue~\cite{doi:10.1080/19371910903070440,ridley2020poverty}, education issue~\cite{ZHANG201447}, income inequality~\cite{su13031038}, etc. Understanding causal relationships is a key step for designing effective policies to combat poverty~\cite{ridley2020poverty}. 

Solving one aspect of MPI indicator might make the MPI index decreases. However, it is not the case if solving one MPI indicator causes other MPI indicators to be active, which might even cause MPI index increases~\cite{grueso2022unveiling}. Typically, many  MPI indicators are correlated~\cite{grueso2022unveiling}, but it is not clear whether how indicators interact with each other or whether they are compliments or substitutes~\cite{alkire2015multidimensional,dotter2017multidimensional,alkire2019dynamics}.  To decide which indicators should be alleviated, policy makers must understand causal relations among MPI indicators and poverty issues.  Nevertheless, it is still no consensus regarding how to infer causal relations among poverty variables~\cite{grueso2022unveiling}. 

Currently, in the era of big data, the massive amount of data is used to alleviate poverty issue~\cite{hassani2019big}. One of the field that utilizes big data to get insight from data is Causal inference. Causal inference plays a key role for explanation, prediction, decision making, etc.~\cite{10.1145/3441452,KUANG2020253}.  It reveals causal relations between variables/factors, which leads to the understanding of influence among variables. In policy making, causal inference can be used to estimate outcomes of policy change~\cite{10.1145/2783258.2785466} and to support policy designing~\cite{ridley2020poverty}.

In the recent works of causal inference on binary variables, the work in~\cite{10.1145/2746410} uses frequent pattern mining to infer causal relations called ``causal rule" from discrete variables using the concept of odd ratios. The framework is consistent with the potential outcome framework~\cite{morgan2015counterfactuals,pearl2009causal} in the causal inference~\cite{10.1145/2746410}. However, the framework assumes that the direction of causal relations are given. In the related field, Bayesian network~\cite{pearl1985bayesian,scutari2010learning,scutari2017bayesian,colombo2014order}, the work in~\cite{scutari2010learning,scutari2017bayesian} provides a software in a form of R package ``bnlearn" in the Comprehensive R Archive Network (CRAN)~\cite{Rcran} that can be used to learn network structures in general, which is suitable for inferring causal networks. 

Recently, the work in~\cite{grueso2022unveiling} inferred Bayesian networks from census data to analyze causal relations between multidimensional poverty components and violence. They also used ``bnlearn" to infer causal graphs.

\end{ColorPar}

To the best of our knowledge, there is still no work of causal-inference framework based on structural causal models on binary variables utilizing estimation statistics, which are able to provide magnitudes of difference between groups (e.g. cause and effect)~\cite{EDOIF}, and is capable of inferring causal directions with degree of causal direction in form of confidence intervals. By knowing a confidence interval of degree of causal-direction, not only we know the causal relation, but we also know how strong the causal relation is. 

\subsection{Our contributions}

To fill the gap, in this work, we formalized the definition of structural causal model on binary variables and proposed a framework to infer causal relations from binary data using estimation statistics technique. Our framework is capable of:

\squishlist
\item {\bf Inferring the causal graph:} inferring causal relations among binary variables in a form of a causal graph using frequent pattern mining on non-parametric hypothesis testing; and
\item {\bf Inferring magnitude of difference in term of confidence intervals:} inferring dependency, association, and degree of causal-direction in forms of confidence intervals using estimation statistics.
\squishend

We validated our framework on simulation data by comparing the proposed method with baseline approaches. We demonstrated the application of our framework on inferring causal relations of mortality, birth weights, and other risk factors in the U.S.twins dataset and causal directions of poverty indicators from the datasets of hundred thousands of Thailand households to support data analysis in poverty from two provinces. \textcolor{HLcolor}{Although, the results we provided in this work are from two provinces, the framework is able to perform the analysis in every province in Thailand.  Since the data structure of variables of MPI indicators are similar across the nation although the issues and related information for each region might be different, the framework has no issue to analyze data from any region. } The proposed framework can be utilized on binary data beyond the field of poverty causal inference.

\subsection{Objective and hypotheses}
\begin{ColorPar}{HLcolor}
In this work, the main objective is to develop a framework to infer causal relations among  binary variables. The framework is designed to be applied in the poverty analysis in order to find casual relations of MPI indicators. We have two research questions with two pairs of null/alternative hypotheses we need to address by using our framework as follows.   
\squishlist
\item Does each pair of binary variables have dependency? $H_0$: there is no dependency. $H_1$: there is dependency.  
\item If a pair of binary variables has dependency, then,  does this pair of binary variables also has a causal relation? $H_0$: there is no causal relation. $H_1$: there is a causal relation. 
\squishend
\end{ColorPar}

\section{Data and related information}
\subsection{Surveys of Poverty of Thailand}
\begin{ColorPar}{HLcolor}
\begin{table}
\centering
\caption{The official dimensions of MPI that policy makers of Thailand currently use to design policies that are related to poverty issues. This table is a part of the work in~\cite{amornbunchornvej2021identifying} and it is used with permission. }
\label{tab:ThaiMPI}
\begin{tabular}{|l|l|}
\hline
\rowcolor[HTML]{C0C0C0} 
Main dimensions                                             & Subdimensions                                                                       \\ \hline
                                                            & Birth weight records                                                                \\ \cline{2-2} 
                                                            & \cellcolor[HTML]{EFEFEF}Hygiene \& healthy diet                                     \\ \cline{2-2} 
                                                            & Accessing to necessary medicines                                                    \\ \cline{2-2} 
\multirow{-4}{*}{Health}                                    & \cellcolor[HTML]{EFEFEF}Working out habits                                          \\ \hline
\cellcolor[HTML]{EFEFEF}                                    & Living in a reliable house                                                          \\ \cline{2-2} 
\rowcolor[HTML]{EFEFEF} 
\cellcolor[HTML]{EFEFEF}                                    & Accessing to clean water                                                            \\ \cline{2-2} 
\cellcolor[HTML]{EFEFEF}                                    & Getting enough water for consumption                                                \\ \cline{2-2} 
\rowcolor[HTML]{EFEFEF} 
\multirow{-4}{*}{\cellcolor[HTML]{EFEFEF}Living conditions} & Living in a tidy house                                                              \\ \hline
                                                            & Children as a pre-school age are   prepared for a school                            \\ \cline{2-2} 
                                                            & \cellcolor[HTML]{EFEFEF}Children as a school age  can attend to mandatory education \\ \cline{2-2} 
                                                            & Everyone in household can attend at least   high-school education                   \\ \cline{2-2} 
\multirow{-4}{*}{Education}                                 & \cellcolor[HTML]{EFEFEF}Everyone in household can read                              \\ \hline
\cellcolor[HTML]{EFEFEF}                                    & Adults (age 15-59) have reliable jobs                                               \\ \cline{2-2} 
\rowcolor[HTML]{EFEFEF} 
\cellcolor[HTML]{EFEFEF}                                    & Seniors (age 60+) have incomes                                                      \\ \cline{2-2} 
\multirow{-3}{*}{\cellcolor[HTML]{EFEFEF}Financial status}            & Average income of household members                                                 \\ \hline
                                                            & \cellcolor[HTML]{EFEFEF}Seniors can access public services in need                  \\ \cline{2-2} 
\multirow{-2}{*}{Access to public services}                 & People with disabilities can   access public services in need                       \\ \hline
\end{tabular}
\end{table}

The survey of poverty used in this paper was from the work in ~\cite{amornbunchornvej2021identifying}. The survey was taken in 2018 by Ministry of Interior of Thailand. The main purpose of the survey is to collect information on poverty issue that represent by MPI indicators along with other information that can be used later by policy makers. The data is currently utilized under the Thai People Map and Analytics Platform (\href{https://www.tpmap.in.th/about_en}{www.TPMAP.in.th}) project under the collaboration of  National Electronics and Computer Technology Center and Office of the National Economic and Social Development Council to address three questions: 1) where poor people are, 2) what issues the poor people face, and 3) how policy makers can help them.  

The number of household for Chiang Mai province in the survey was 378,466 households, while it was 353,910 households for Khon Kaen province. The survey was conducted for the purpose of analyzing of multidimensional poverty index (MPI)~\cite{alkire2010multidimensional,alkire2021global}.

In the aspect of MPI, the surveys collected 31 MPI indicators that represent five main aspects of poverty to compute MPI index $M_0$ (see some indicators in Table~\ref{tab:ThaiMPI}).  For each individual, if he/she has an issue with a given indicator (e.g. he/she has less income than a specific threshold for an income indicator), then an indicator has value 1, otherwise it is 0. The surveys were processed by transforming each answer of a specific issue in the surveys using a set of criteria to be a binary MPI indicator. For example, in one of the indicator of "Health", the question is "Did the newborns in the house weigh above 2.5 Kg?". If the answer is "No", then the corresponding MPI indicator is one. Otherwise, it is zero. For more details regarding criteria for other indicators, please visit \href{https://www.tpmap.in.th/about_en}{www.TPMAP.in.th}.

The degree of individual deprivation $d_i$ is computed by counting a number of indicators that a person $i$ has poverty issues divide by a number of total indicators.  Then, if a person $i$ has $d_i$ greater than a specific threshold (varying from country to country), then $i$ is considered to be a deprived person. 

Given $q_0\in [0,1]$ is a ratio of deprived people within total populations, $a_0$ is average degree of individual deprivation $d_i$ within a deprived population. The MPI index can be computed as follow:  
\begin{equation}
\label{eq1}
    M_0=q_0\times a_0.
\end{equation}

The index $M_0\in [0,1]$ in Eq.~\ref{eq1} represents the degree of poverty deprivation in a given population. MPI is close to 0 when there is no poverty in any indicators, while it is close to 1 if everyone has issues in almost all indicators. Hence, lower MPI is better.

After knowing the MPI index of each area, policy makers can realize how severe deprivation issues each area is for the entire nation by analyzing MPI indices. The policy makers also know which aspect of deprivation each area has from MPI indicators. With MPI index and MPI indicators from the entire nation, policy makers have answers for  1) where poor people are, and 2) what issues the poor people face. Then, they can plan to solve the last question: 3) how policy makers can help them. 

By having policy to alleviate a specific MPI indicator, the poverty can be alleviated in a specific aspect, which results in reducing MPI index $M_0$. However, the impacts of solving one MPI indicator among other MPI indicators still remain; whether solving one indicator causes another indicator to be solved/ to have more issue. In this work, we focus on the remaining question of causal relations among MPI indicators.

\subsection{Twin births of the United States}
Infant mortality can be a predictor of poverty~\cite{sims2007urban}. By understanding causal factors of infant mortality, policy makers might be able to understand more regarding poverty situation in areas.  This dataset consists of several variables regarding pairs of twins,  birth weights,  the mortality outcome, etc., from the Twin births of the United States in 1989-1991. There are 71,345 pairs of twin in the dataset. The dataset was used in~\cite{louizos2017causal}, which was included in the literature survey work in~\cite{guo2018survey}.

In this work, since we are interested only in inferring of causal relations in binary variables, we reformat the dataset and use only binary variables: birth weights of twins, and  the mortality outcome along with other risk variables. For the birth weight, the value is one if at least one of the twin has the weight below or equal 1000 grams. Otherwise, it is zero. For the mortality outcome, one represents the twin being death and zero represents being alive. There are also other parent's risk-factor variables we included in the analysis: alcohol use,  Anemia, Cardiac, chronic hypertension, Diabetes, Eclampsia, Hemoglobinopathy, Herpes, Incompetent cervix, Lung, Preqnancy-associated hypertension, tobacco use, and Uterine bleeding. All risk-factor variables are one if there is any risk, otherwise, they are zero.

Our goal is to used the dataset to evaluate whether the framework is able to reveal the causal relation of birth weight and twin mortality. 
\end{ColorPar}
\section{Methods}

\begin{figure}[ht!]
\includegraphics[width=1\columnwidth]{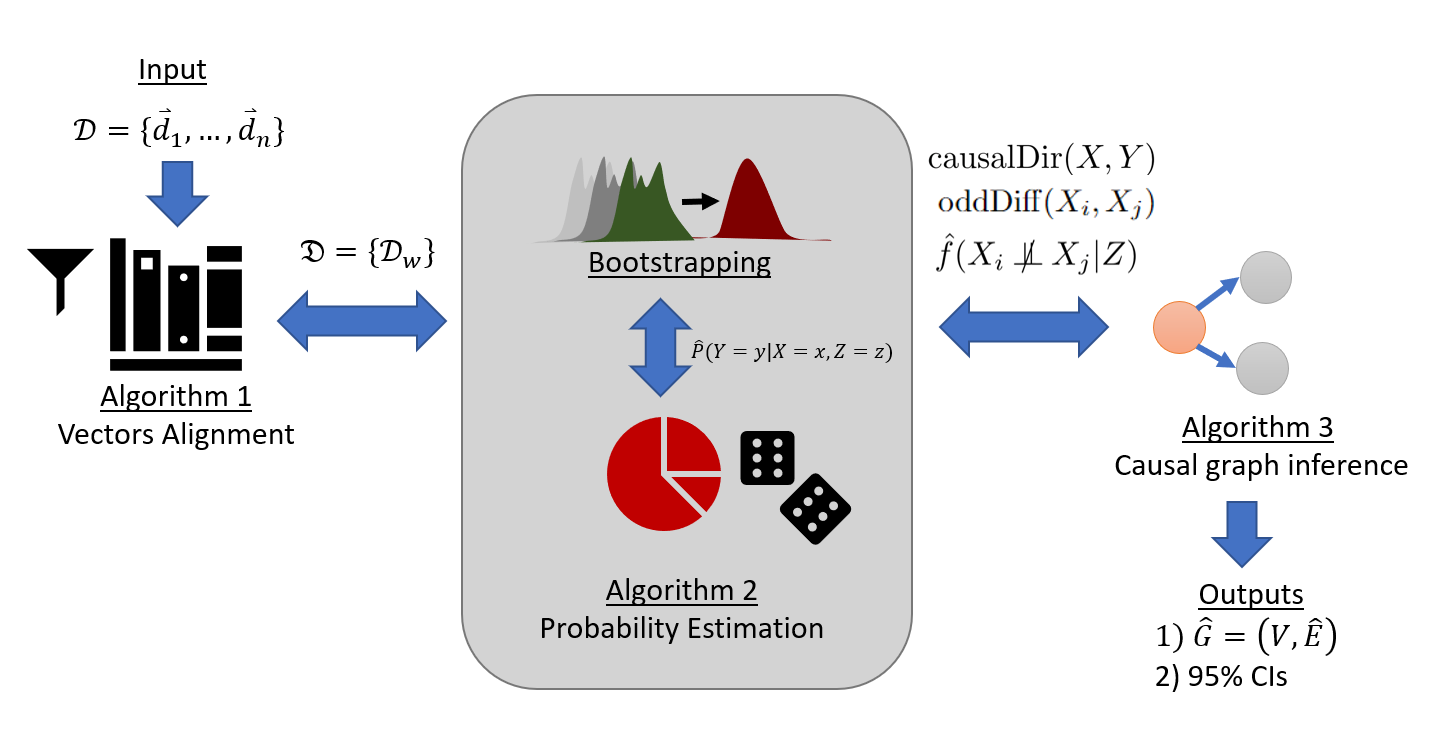}
\caption{A high-level overview of the proposed framework.}
\label{fig:mainFramework}
\end{figure}

\begin{ColorPar}{HLcolor}
In this section, the details of proposed framework for inferring causal relations among binary variables are provided. The reasons we choose to study and develop the framework for binary variable rather than other types of variables because MPI index requires only binary indicators for computing the index; MPI cannot take multinomial or real-number variables as MPI indicators. However, it is not clear how each MPI indicator impacts each other. Hence, the main focus on this work is to develop the framework of causal inference that works on binary variables.
\end{ColorPar}

Given a dataset $\mathcal{D}=\{\vec{d}_1,\dots,\vec{d}_n\}$ where $\vec{d}_i=(x_{i,1},\dots,x_{i,d})$ is an $i$th vector of realizations of random variables $X_1,\dots,X_d$,  the main purpose of this work is to provide a solution for \btcgip~\ref{prob1} by inferring a transitive causal graph $\hat{G}=(V,\hat{E})$ from $\mathcal{D}$. In the context of poverty analysis, $\mathcal{D}$ can be represented as an $n\times d$ matrix where $n$ rows represent households and $d$ columns represent poverty factors or MPI indicators. The output of the framework is the adjacency matrix of a causal graph among poverty factors. In the context of MPI, the matrix contains information of causal relations between MPI indicators; which indicators cause other indicators to changes when they change.

Figure~\ref{fig:mainFramework} illustrates an overview of the proposed framework. In the first step, the framework performs ``Bootstrapping'' to generate $\mathcal{B}=\{\mathcal{D}'_1,\dots,\mathcal{D}'_q\}$ from $\mathcal{D}$ (Section~\ref{sec:Hesti}). Then, it aligns data in $\mathcal{B}$ using Algorithm~\ref{algo:alg-alignment}. The purpose of these two steps is to infer patterns of strong association relations among binary variables and to prepare data for the next step.  

Afterwards, the framework infers a transitive causal graph $\hat{G}$ using Algorithm~\ref{algo:alg1}, which deploys several statistics that derived from $\mathcal{B}$. The core of statistical estimation in the framework is the estimation of conditional probability using Algorithm~\ref{algo:alg-condprob}. This step infers a causal relations from asymmetry of association direction between binary variables; changing one variable can change another but not vice versa.

\begin{ColorPar}{HLcolor}
 To increase readability of notations, in a directed graph, we use $v\xrightarrow{}u$ to represent that there is a directed edge from $v$ to $u$, and  $v\xleftarrow{}u$ for a directed edge from $u$ to $v$. We also use $v\xrightarrow{}u$ to represents $v$ causes $u$ in a causal graph. We also write $X \indep Y$ if $X,Y$ are statistically independent as well as using $X\nindep Y$ to represent that $X,Y$ are statistically dependent.
 \end{ColorPar}
 
\subsection{Inferring empirical conditional dependency and probability}
\label{sec:condprob}
\begin{ColorPar}{HLcolor}
In this part, we build a function to estimate conditional dependency and probability among binary variables.
\end{ColorPar}

\setlength{\intextsep}{0pt}
\IncMargin{1em}
\begin{algorithm2e}
\caption{ Vector alignment algorithm}
\label{algo:alg-alignment}
\SetKwInOut{Input}{input}\SetKwInOut{Output}{output}
\Input{ $\mathcal{D}=\{\vec{d}_1,\dots,\vec{d}_n\}$ that was generated from b-SCM $\mathfrak{C}$ where $\vec{d}_i=(x_{i,1},\dots,x_{i,d})$ is an $i$th vector of realizations of random variables $X_1,\dots,X_d$ in $\mathfrak{C}$}
\Output{ $\mathfrak{D}=\{\mathcal{D}_1,\dots,\mathcal{D}_{2^d}\}$ }

\SetAlgoLined
\nl Let $\mathfrak{D}=\{\mathcal{D}_1,\dots,\mathcal{D}_{2^d}\}$ be a set of alignment vectors.\\
\nl For each $\mathcal{D}_w \in \mathfrak{D}$, set $\mathcal{D}_w = \emptyset$.\\

\nl \For{$i\gets1$ \KwTo $n$ }{
\nl Suppose a binary vector $\vec{d}_i$ has a decimal number $w_i$ where $w_i = \sum_{k=0}^{d-1} x_{i,d-k}*2^k$\\
\nl Set $\mathcal{D}_{w_i}\gets\mathcal{D}_{w_i} \cup \{\vec{d}_i\}$ \\
}
\nl Return $\mathfrak{D}$\\
\end{algorithm2e}\DecMargin{1em}

To infer whether two variables $X,Y$ are statistically independent given $Z$ or $X \indep Y |Z$, we can check the following statement:

\begin{equation}
\label{eq2}
     |P(X,Y|Z) - P(X|Z)P(Y|Z) |\geq 0.
\end{equation}

In Eq.~\ref{eq2}, if $|P(X,Y|Z) - P(X|Z)P(Y|Z) | = 0$, then we can conclude that $X\indep Y$. Otherwise, $X\nindep Y$. However, in real datasets, if the distributions  that generate the data are unknown, we cannot access to compute the probability $P(X)$ directly. In the data mining community, the concept of support and confidence~\cite{Agrawal:1993:MAR:170035.170072,Han2007,aggarwal2014frequent} might be used to estimate the probability of any given event. 
Before computing conditional probability using support and confidence, we need to align dataset $\mathcal{D}$ using the Algorithm~\ref{algo:alg-alignment}. After aligning vectors, we can compute estimate probability and conditional probability using the Algorithm~\ref{algo:alg-condprob}.


\setlength{\intextsep}{0pt}
\IncMargin{1em}
\begin{algorithm2e}
\caption{ Conditional probability estimation algorithm}
\label{algo:alg-condprob}
\SetKwInOut{Input}{input}\SetKwInOut{Output}{output}
\Input{ A set of alignment vectors $\mathfrak{D}=\{\mathcal{D}_1,\dots,\mathcal{D}_{2^d}\}$, $y=\{y_{i},\dots,y_{i'}\}$, and $z=\{z_k,\dots,z_{k'}\}$. }
\Output{ $\hat{P}(Y=y|Z=z)$ }
\SetAlgoLined
\nl Let $\mathfrak{D}_{z}$ be a subset of $\mathfrak{D}$ s.t. $\forall \mathcal{D}_{w} \in \mathfrak{D}_{z}$, $w$ is a decimal value where  $k$th-$k'$th bits of $w$ in a binary form are equal to $z_k,\dots,z_{k'}$, and let $\mathcal{D}_{z^*}=\bigcup_{\mathcal{D}_{w} \in \mathfrak{D}_{z}}$. If $z=\emptyset, \mathcal{D}_{z^*}=\bigcup_{\mathcal{D}_{w} \in \mathfrak{D} }$. \\
\nl Inferring $\mathcal{D}_{z^*}$ from $\mathfrak{D}$. \\
\nl Counting a number of binary vectors in $\mathcal{D}_{z^*}$ s.t. $i$th-$i'$th bits are equal to $y_i,\dots,y_{i'}$.\\
\nl Divide the counting number above by the size of $\mathcal{D}_{z^*}$ and keeps this ratio as $\hat{P}(Y=y|Z=z)$.\\
\nl Return $\hat{P}(Y=y|Z=z)$\\
\end{algorithm2e}\DecMargin{1em}
Let $\hat{P}(Y=y)$ be an estimate probability of $Y=y$ estimated by support and $\hat{P}(Y=y|Z=z)$ be an estimated conditional probability of $Y=y$ given $Z=z$ estimated by confidence.  We can have the following equation to compute the degree of dependency between $X_i,X_j$ given $Z$.

\begin{equation}
\label{eq:indpApprox}
    \hat{f}(X_i \nindep X_j ) = \sum_{x_i,x_j}|\hat{P}(Y=\{x_i,x_j\}) - \hat{P}(Y=\{x_i\})\hat{P}(Y=\{x_j\})|\times\hat{P}(x_i,x_j)
\end{equation}

Where $x_i,x_j,z$ are any possible binary values. For a degree of conditional dependency, we can estimate it using the equation below.

\begin{multline}
\label{eq:indpApprox2}
    \hat{f}(X_i \nindep X_j |Z=z) =  \sum_{x_i,x_j,z}\hat{P}(x_i,x_j |z) *abs [\hat{P}(Y=\{x_i,x_j\}|Z=z) \\- \hat{P}(Y=\{x_i\}|Z=z)\hat{P}(Y=\{x_j\}|Z=z) ]
\end{multline}

Where $abs[]$ is an absolute function. In both Eq.~\ref{eq:indpApprox} and Eq.~\ref{eq:indpApprox2}, $X_i,X_j$ are independent if the value is close to zero.

\subsection{Inferring empirical association}
\label{sec:empassocdir}
\begin{ColorPar}{HLcolor}
In this part, we assess whether changing one binary variable turning other variables to change in which of three directions: positive, no change, or negative. If it is a positive direction, then two binary variables trend to have the similar values. If it is negative, it implies two variables trend to have an opposite binary value. No change implies there is no pattern whether having a specific value for one variable implies having a specific value in another variable. To find a direction of association between variables, the first method is the Odd Ratio.
\end{ColorPar}
\begin{equation}
\label{eq:oddRatio}
    \mathrm{oddRatio}(X_i,X_j) =\frac{ \hat{P}(x_i =1 ,x_j =1 )\hat{P}(x_i =0 ,x_j =0 ) } { \hat{P}(x_i =0 ,x_j =1 )\hat{P}(x_i =1 ,x_j =0) }
\end{equation}

Where $\mathrm{oddRatio}(X_i,X_j)>1$ implies $X_i,X_j$ has a positive association, while $\mathrm{oddRatio}(X_i,X_j)<1$ implies $X_i,X_j$ has a negative association. The $\mathrm{oddRatio}(X_i,X_j)=1$ implies no direction of association. 

The second method is called the Odd Difference, which is an alternative of the odd ratio in Eq.~\ref{eq:oddRatio}, can be defined below.

\begin{multline}
\label{eq:oddDiff}
    \mathrm{oddDiff}(X_i,X_j) =abs[{ \hat{P}(x_i =1 ,x_j =1 )\hat{P}(x_i =0 ,x_j =0 ) } \\
    - { \hat{P}(x_i =0 ,x_j =1 )\hat{P}(x_i =1 ,x_j =0) }]
\end{multline}

Where $abs[]$ is an absolute function. $\mathrm{oddDiff}(X_i,X_j)>0$  implies $X_i,X_j$ has a positive association, while $\mathrm{oddDiff}(X_i,X_j)<0$ implies $X_i,X_j$ has a negative association. There is no association if $\mathrm{oddDiff}(X_i,X_j)=0$

\subsection{Inferring empirical causal direction}
\label{sec:causalDir}
\begin{ColorPar}{HLcolor}
After we check that there is no variable $Z$ s.t. $Y \indep X |Z$ using Eq.~\ref{eq:indpApprox}. In Algorithm~\ref{algo:alg1}, the next step to check whether $X\xrightarrow{} Y$ is to check their estimated conditional probability. \end{ColorPar} We approximate the probability below.

\begin{equation}
\label{eq:CD}
    \mathrm{causalDir}(X,Y) = \hat{P}(Y=y|X=x) -  \hat{P}(X=x|Y=y)
\end{equation}
Where $\mathrm{causalDir}(X,Y)>0$ implies $X\xrightarrow{} Y$, $\mathrm{causalDir}(X,Y)<0$ implies $Y\xrightarrow{} X$, and no conclusion of causal direction for $\mathrm{causalDir}(X,Y)=0$. 

\subsection{Hypothesis tests and estimation statistics}
\label{sec:Hesti}
\begin{ColorPar}{HLcolor}
In this part, we focus on inferring dependency, association direction, and causal relations among binary variables using both hypothesis testing and estimation statistics.
\end{ColorPar}

Given $X_1,\dots,X_k\sim P_X$ are random variables that independent and identically distributed (i.i.d.) w.r.t. an unknown distribution $P_X$ with mean $\mu<\infty$ and variance $\sigma^2<\infty$, the realizations of these random variables are in a set $x'=\{x_1,\dots,x_k\}$. By performing the sampling with replacement from $x'=\{x_1,\dots,x_k\}$ $q$ times, we can have $q$ sets of data sampling from $x'$: $x'_1,\dots,x'_q$. The process of sampling $x'$ to be $x'_1,\dots,x'_q$ is called ``Bootstrapping''. The summary statistics $\mu,\sigma$ of  $x'_1,\dots,x'_q$ is approaching  $x'$'s when a number of bootstrap replicates $q$ is large~\cite{athreya1987bootstrap,bickel1981some,EDOIF}. 

In the aspect of hypothesis testing, suppose the null hypothesis $H_0: \mu =0$ while the alternative hypothesis $H_1:\mu>0$, we can test either $H_0$ or $H_1$ is supported by $x'$ using the sets of data from bootstrapping $x'$: $x'_1,\dots,x'_q$. However, there are several disadvantages of using the hypothesis testing alone as follows: 1) the hypothesis testing provides only either  $H_0$ or $H_1$ is supported by data, but there is no information regarding the  magnitude of summary statistics we estimate~\cite{ellis2010essential}, 2) the hypothesis testing always rejects $H_0$ in some system even the effect might be too small~\cite{cohen1995earth}, 3) the hypothesis testing faces the problem of repeatability~\cite{halsey2015fickle}.

To address these issues, ``estimation statistics'' has been  developed, which is considered as a methodology that is more informative than the hypothesis testing~\cite{cumming2013understanding,claridge2016estimation,ho2019moving,EDOIF}.

In the aspect of estimation statistics, the sets of data from bootstrapping $x'$: $x'_1,\dots,x'_q$ can be used to estimate $100*(1-\alpha)\%$ confidence interval (CI) of $\mu$. Moreover, if we have two datasets $x'$ and $y'$, we can compare the magnitude of difference between $x'$ and $y'$ using mean-difference CI.  

Given $\mathcal{D}=\{\vec{d}_1,\dots,\vec{d}_n\}$  where $\vec{d}_i=(x_{i,1},\dots,x_{i,d})$ is an $i$th vector of realizations of random variables $X_1,\dots,X_d$, and a number of bootstrap replicates $q$, we generate $\mathcal{B}=\{\mathcal{D}'_1,\dots,\mathcal{D}'_q\}$ from bootstrapping. Then, we use  $\mathcal{B}$ to estimate the following quantities.

\numsquishlist
\item \textbf{Dependency between $X_i,X_j$ given $Z$}:
For each pair of variables $X_i,X_j$, we can infer $\mathfrak{I}=\{\hat{f}_1(X_i \nindep X_j |Z),\dots,\hat{f}_q(X_i \nindep X_j |Z)\}$ in Eq.~\ref{eq:indpApprox} from $\mathcal{B}$ where $\hat{f}_k(X_i \nindep X_j |Z)$ is inferred from $\mathcal{D}'_k \in \mathcal{B}$. Let $\mu_\mathfrak{I}$ be the expectation of $\mathfrak{I}$. The null hypothesis $H_0: \mu_\mathfrak{I}= 0$, while the alternative hypothesis $H_1:\mu_\mathfrak{I}>0$. We use Mann-Whitney test~\cite{mann1947}, which is a nonparametric test, to determine whether we can reject $H_0$ with the significance level $\alpha = 0.05$. If $H_0$ is rejected, then we can conclude that $X_i \nindep X_j |Z$. In the aspect of  estimation statistics, we report the $95\%$-CI of  $\mu_\mathfrak{I}$. 

\item \textbf{Odd difference $\mathrm{oddDiff}(X_i,X_j)$}: We compute\\ 
$\mathfrak{O}=\{\mathrm{oddDiff}_1(X_i,X_j),\dots,\mathrm{oddDiff}_q(X_i,X_j)\}$ 
on Eq.~\ref{eq:oddDiff} from $\mathcal{B}$. Let $\mu_\mathfrak{O}$ be the expectation of $\mathfrak{O}$. We use Mann-Whitney test~\cite{mann1947} to determine whether we can reject $H_0:\mu_\mathfrak{O} =0$. If $H_0$ is rejected, then we can conclude that the alternative hypothesis $H_1:\mu_\mathfrak{O}\neq 0$ is supported. After rejecting $H_0$, $X_i,X_j$ has a positive association if $\mu_\mathfrak{O}>0$, otherwise, for $\mu_\mathfrak{O}<0$, $X_i,X_j$ has a negative association. We also report the $95\%$-CI of  $\mu_\mathfrak{O}$. 

\item \textbf{Causal direction $\mathrm{causalDir}(X,Y)$}: We compute\\
$\mathfrak{K}=\{\mathrm{causalDir}_1(X,Y),\dots,\mathrm{causalDir}_q(X,Y)\}$ 
on Eq.~\ref{eq:CD} from $\mathcal{B}$. Let $\mu_\mathfrak{K}$ be the expectation of $\mathfrak{K}$. The null hypothesis $H_0: \mu_\mathfrak{K}= 0$, while the alternative hypothesis $H_1:\mu_\mathfrak{K}\neq0$. If we cannot reject $H_0$, then there is no conclusion regarding the causal direction of $X,Y$. In contrast, suppose $H_0$ is successfully rejected, $X\xrightarrow{}Y$ if $\mu_\mathfrak{K}>0$, otherwise, for $\mu_\mathfrak{K}<0$, $Y\xrightarrow{}X$. We also report the $95\%$-CI of  $\mu_\mathfrak{K}$. 

\numsquishend


\subsection{The proposed algorithm for inferring binary causal relation from binary indicators}
\begin{ColorPar}{HLcolor}
After having all functions we need to estimate causal relations, we propose Algorithm~\ref{algo:alg1} to solve \btcgip.  Specifically, given binary data of indicators, the goal of the problem is to infer $b$-SCM causal relations between variables, which can explain that whether any binary indicator causes other binary indicators to change. See Theorem~\ref{thm:algcorrect} for details of the proof that the algorithm provides the solution for the problem.

Briefly,  Algorithm~\ref{algo:alg1} takes binary data to assess association relations and directions among binary variables using the methods in Section~\ref{sec:condprob} and Section~\ref{sec:empassocdir} respectively. Then, the algorithm assesses statistical significance of these  association relations and directions using methods in Section~\ref{sec:Hesti}. After having significant association relations and directions, the causal relations are estimate using the function $\mathrm{causalDir}(X,Y)$ in Section~\ref{sec:causalDir}. Afterwards, the inferred causal relations are tested for the statistically significance by the method in Section~\ref{sec:Hesti}. Finally, the algorithm reports all outputs that are related to causal relations and their by-product results.

\end{ColorPar}

\setlength{\intextsep}{0pt}
\begin{algorithm2e*}
\begin{tiny}
\caption{ b-SCM transitive-causal-graph-inference algorithm}
\label{algo:alg1}
\SetKwInOut{Input}{input}\SetKwInOut{Output}{output}
\Input{ $\mathcal{D}=\{\vec{d}_1,\dots,\vec{d}_n\}$ that was generated from b-SCM $\mathfrak{C}$ where $\vec{d}_i=(x_{i,1},\dots,x_{i,d})$ is an $i$th vector of realizations of random variables $X_1,\dots,X_d$ in $\mathfrak{C}$.}
\Output{ Transitive causal graph $\hat{G}=(V,\hat{E})$ s.t. $(X_i,X_j) \in \hat{E}$ if $X_i\xrightarrow{} X_j$.  }
\SetAlgoLined
\nl  Set $E_0 = \emptyset$,$E_1 = \emptyset$,$E_2=\emptyset$,$\hat{E}=\emptyset$.
\tcc{Inferring association relation between variables}
\nl \For{$i\gets1$ \KwTo $d-1$ } { 
\nl   \For{$j\gets i+1$  \KwTo $d$ }{
\nl        Check whether $X_i \nindep X_j$ from $\mathcal{D}$\\
\nl        \uIf{$X_i \nindep X_j$}{
\nl            Add $(X_i,X_j)$ and $(X_j,X_i)$ to $E_0$ \\
            }
        }
    }
\tcc{Filtering associations without true causal directions from any confounding factor} 
\nl \For{$(X_i,X_j) \in E_0$ }{
\nl     \uIf{$\exists Z, (Z,X_i) \in E_0$ and $(Z,X_j) \in E_0$}{
\tcc{ $X_i,X_j$ has a potential confounding factor $Z$. }
\nl         \For{ any $Z\notin\{X_i,X_j\}$ s.t. $(Z,X_i) \in E_0$ and $(Z,X_j) \in E_0$} 
        {
\nl             Check whether $X_i \nindep X_j |Z$ from $\mathcal{D}$\\
        }
\nl         \uIf{$X_i \nindep X_j |Z$ for any $Z\notin\{X_i,X_j\}$}{
\tcc{By Principle~\ref{princ:Reichenbach}, $\forall Z\notin\{X_i,X_j\}, X_i \nindep X_j |Z$ implies that either $X_i\xrightarrow{}X_j$ or $X_j\xrightarrow{}X_i$. }
\nl               Add $(X_i,X_j)$ and $(X_j,X_i)$ to $E_1$ \\
            }
        
\nl         \Else{
           \tcc{By Principle~\ref{princ:Reichenbach}, $X_i \indep X_j |Z$ implies that $X_i,X_j$ has no causal relation. }
\nl             Add $(X_i,X_j)$ and $(X_j,X_i)$ to $E_2$ \\
            }
    }
 \nl    \Else{
 \tcc{ $X_i,X_j$ has no potential confounding factor. }
 \nl         Add $(X_i,X_j)$ and $(X_j,X_i)$ to $E_1$ \\
    }
}
\tcc{Inferring whether $X_i\xrightarrow{}X_j$ or $X_j\xrightarrow{}X_i$.} 
\nl \For{$(X_i,X_j) \in E_1$ }{
\nl Check whether $P(X_i=1,X_j=1)P(X_i=0,X_j=0)>P(X_i=0,X_j=1)P(X_i=1,X_j=0)$\\
\nl Replace $X_i=1$ with $X_i=0$ and $X_i=0$ with $X_i=1$ for the lines below if $(X_i,X_j)$ has a negative association.\\
\tcc{Suppose $(X_i,X_j)$ has a positive association.} 
\nl Compute $P(X_j=1|X_i=1)$ and $P(X_i=1|X_j=1)$\\
\tcc{Using Proposition~\ref{prop:causalDir} to find causal directions.} 
\nl     \uIf{$P(X_j=1|X_i=1) > P(X_i=1|X_j=1)$}{ 
\nl         Add $(X_i,X_j)$ to $\hat{E}$ \\
    }
\nl     \uElseIf{$P(X_j=1|X_i=1) < P(X_i=1|X_j=1)$}{
\nl         Add $(X_j,X_i)$ to $\hat{E}$ \\
    }
\nl     \Else{
\nl         Add $(X_i,X_j)$ and $(X_j,X_i)$ to $E_2$ \\
    }
}
\nl Return $\hat{E}$\\
\end{tiny}
\end{algorithm2e*}\DecMargin{1em}

\subsection{Time complexity}

Given $n$ is a number of data points, $d$ is a number of dimensions, and $b$ is a number of bootstrap replicates, for the Vector alignment in Algorithm~\ref{algo:alg-alignment} and  the Conditional probability estimation in Algorithm~\ref{algo:alg-condprob}, both require $O(n)$.

To check whether $X_i \nindep X_j$ and any independence check, it requires $O(bn)=O(n)$ for the bootstrapping approach of which its $b$ replicates are needed to estimate the conditional probability in Eq.~\ref{eq:indpApprox}. The $b$ is typically considered as a constant number.  In the Algorithm~\ref{algo:alg1}, it requires $O(d^2n)$ for line 1-6. For the line 7-16, it also requires   $O(d^2n)$ since the number of edges is bounded by $O(d^2)$ and the operation of Independence checking is $O(n)$. For the line 17-26, it also requires $O(d^2n)$, which has the same reason for the number of edges and the operation to compute the conditional probability requires $O(n)$. Hence, the Algorithm~\ref{algo:alg1} has the time complexity as $O(d^2n)$. 
 \section{Experimental setup}
\subsection{Simulation data}
\label{sec:ExpSim}
In the first simulation, there are 10 poverty indicators. Let $X_1,\dots,X_{10}$ be random variables of poverty indicators, $p$ be a probability of a random variable being 1, and $N_p$ is a random variable that has $P(N_p=1)=p$. The following equations (Eq.~\ref{eq:xk},~\ref{eq:x1},~\ref{eq:x4}, and~\ref{eq:x6}) represent the directed causal relations of these random variables.

\begin{equation}
    X_k \xleftarrow{}  N_p, k\notin \{1,4,6\}
    \label{eq:xk}
\end{equation}

\begin{equation}
    X_1 \xleftarrow{} X_2 \vee X_3 \vee N_p
    \label{eq:x1}
\end{equation}
\begin{equation}
    X_4 \xleftarrow{} X_2 \vee X_5 \vee N_p
    \label{eq:x4}
\end{equation}
\begin{equation}
    X_6 \xleftarrow{} X_1 \vee X_4 \vee N_p
    \label{eq:x6}
\end{equation}

 For each individual, if the value is one in the indicator $k$, it means this individual has a poverty issue in the indicator $k$. In the first simulation, data is generated using $p \in \{0.5,0.3,0.1,0.05\}$, which has 500 individuals for each $p$ value. In the second simulation, data is generated varying number of individuals $n \in \{50,100,150,300,500,750,1000\}$, which has $p=0.3$.

\subsection{Baseline methods and performance measure}
To the best of our knowledge, there is no direct method that deals with causal inference from binary variables using frequent pattern techniques except the work in~\cite{10.1145/2746410}. It uses the Frequent Pattern Mining to infer causal relations called ``Causal rule" from discrete variables using the concept of odd ratios. The framework is consistent with the potential outcome framework~\cite{morgan2015counterfactuals,pearl2009causal} in the causal inference~\cite{10.1145/2746410}. However, the causal-rule framework in~\cite{10.1145/2746410} assumes that the causal directions are given. Therefore, we modified the causal rule framework to be able to infer causal direction using the same approach as our framework. 

For the Bayesian network, we deploy the PC algorithm~\cite{colombo2014order}, which is a first practical constraint-based structure learning algorithm from the ``bnlearn"  package in R~\cite{scutari2010learning,scutari2017bayesian}. The PC algorithm is designed for inferring causal structure from data, which is suited in our task of causal inference in this paper. 

Another baseline approach is the Frequent-pattern approach that can be applied in data from binary variables. This approach utilizes the support and confidence in association rule mining directly to find causal relations. For example, if the confidence of Y given X is higher than X given Y, then X causes Y.

We compare all methods with the tasks of 1) inferring Transitive causal graph and  2) inferring  Directed causal graph. In the task of inferring the Transitive causal graph, if X causes Y and Y causes Z, then inferring that X causes Z is acceptable. However, in the 2) task, all methods must be able to infer that X causes Y directly but X does not cause Z directly. 

We measure the performance of all methods using simulation datasets by comparing the inferred causal graphs from both tasks with the ground truth graph using precision (Pre), recall (Re), and F1 score. The true positive (TP) is the case when a causal relation or causal edge (e.g. X causes Y) exists in both inferred and ground-truth graphs. The false positive (FP) is the case when the causal edge exists in the inferred graph but never exists in the ground-truth graph.  The false negative (FN) is the case when the causal edge exists in the ground-truth graph but never exists in the inferred graph.  The precision is a ratio of TP/(TP+FP), the recall is a ratio of TP/(TP+FN), and F1 score is a ratio of 2(Pre*Re)/(Pre+Re). 

  \section{Results}
\begin{ColorPar}{HLcolor}
In this section, the results of our proposed approach were reported using several datasets in order to illustrate that 1) our framework performed well against baseline approaches (Section~\ref{sec:simres}), 2) our framework was able to retrieve causal relations in a real-world dataset (Section~\ref{sec:twinres}), as well as 3) our framework was able to infer none-trivial causal relations of MPI indicators in Thailand poverty surveys (Section~\ref{sec:ThaiPovertyRes}).

In Section~\ref{sec:simres}, the results of performance of our framework compared to several baseline approaches using simulation datasets that the ground truth was known 
 were reported . Then, in Section~\ref{sec:twinres}, the Twin-births-of-the-United-States dataset was used to illustrate that our framework was able to retrieve causal relations, which are consistent with the ground truth in the literature. Finally, in Section~\ref{sec:ThaiPovertyRes}, the Thailand datasets of poverty surveys are used to demonstrate the application of our framework that can support policy makers to alleviate poverty issues by inferring causal relations among MPI indicators. 
 \end{ColorPar}

\subsection{Simulation results}
\label{sec:simres}
\begin{ColorPar}{HLcolor}
In this part, the results of inferring causal relations in simulation datasets are reported. Briefly, the results were from four methods: 1) Causal rule method, 2) Frequent pattern, 3) PC algorithm, and 4) Proposed method. There were two tasks for measuring performance of causal inference: A) inferring a transitive causal graph and B) inferring a directed causal graph. 

For inferring transitive causal graphs, the task is to infer whether any X and Y variables have any directed and/or indirected causal relations.   In contrast, the task of inferring directed causal relations considers to find whether any X and Y have only directed causal relations.

According to the results, the Frequent pattern performed slightly better than others in the task of inferring transitive causal graphs while our proposed method performed better than others in the task of inferring directed causal graphs. Below are elaborate details of the results.
\end{ColorPar}

\begin{figure}[ht!]
\includegraphics[width=1\columnwidth]{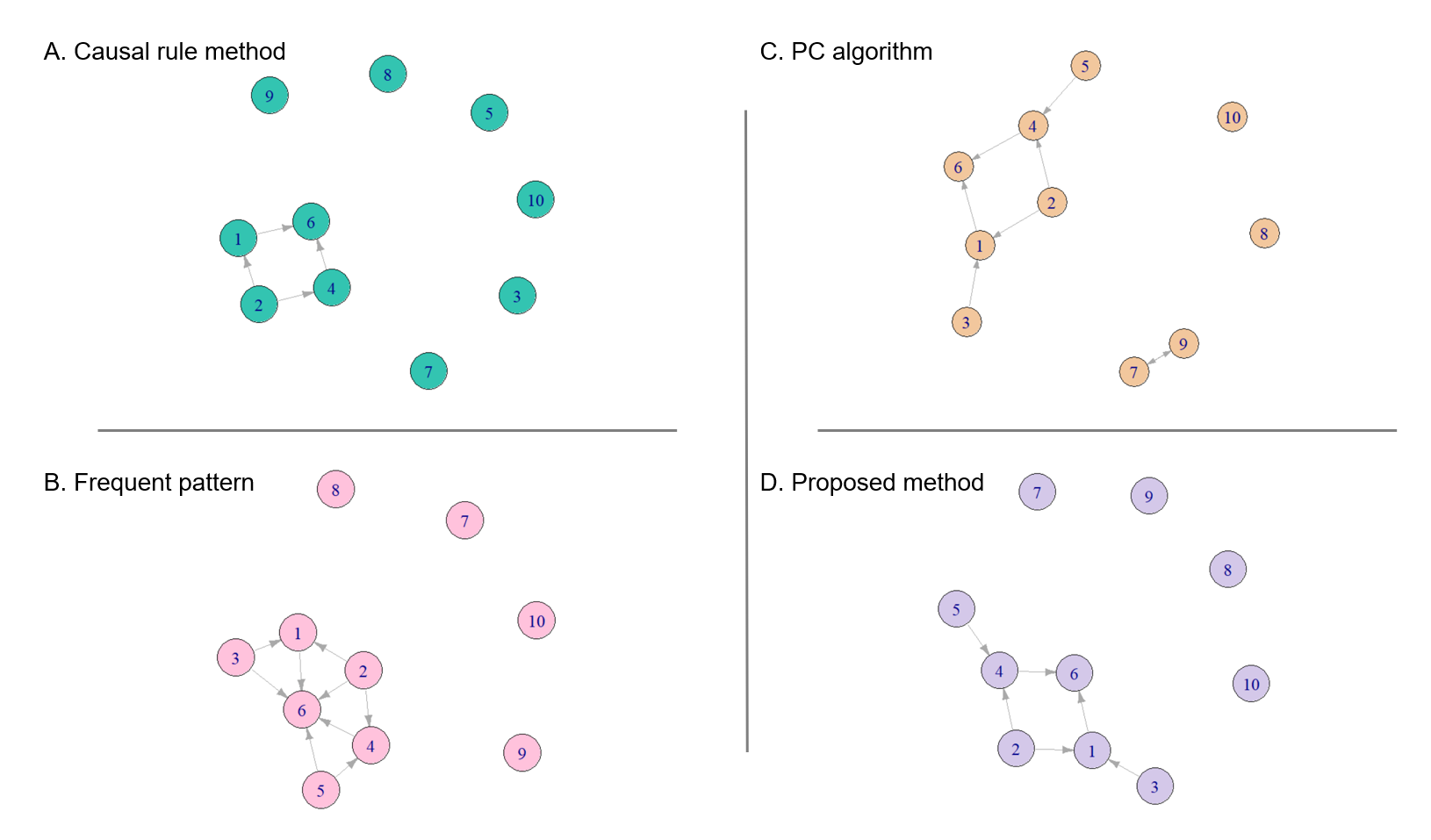}
\caption{Inferred directed causal graphs from a simulated dataset in Section~\ref{sec:ExpSim} with $p=0.1,n=500$ using four approaches: A. Causal rule method, B. Frequent pattern, C. PC algorithm, and D. Proposed method. Each node represents a variable (e.g. node 1 represents $X_1$ in Eq.~\ref{eq:x1} and node 4 represents $X_4$ in Eq.~\ref{eq:x4}.) Edges represent causal relations between variables. Only the proposed method inferred the graph correctly.
} 
\label{fig:CausalGraphRes}
\end{figure}

Results of performance of four approaches in simulation with different levels of $p$ (the probability of variable being 1) are in the Table~\ref{tab:res-Sim1TrsGraphInfer} and~\ref{tab:res-Sim1DirGraphInfer}. For the task of interring transitive causal graphs (Table~\ref{tab:res-Sim1TrsGraphInfer}), based on the F1 scores, the Frequent pattern approach performed the best, while the second and third performers were our approach and Causal Rule respectively. The last performer was the PC method. In the high value of $p$, all approaches performed the best; the F1 score is equal to 1. However, when the $p$ decreases, only Frequent pattern approach performed well.

In the task of inferring directed causal graphs (Table~\ref{tab:res-Sim1DirGraphInfer}), however, the Frequent pattern approach performed the worst, while our approach performed the best. When the $p$ decreases, only our approach performed well.

\begin{table}[]
\caption{The result of inferring transitive causal graphs by frequent pattern, Causal Rule, Bayesian Network, and proposed methods in simulation varying $p$ with $n=500$. The red color is the better results in term of F1 between two method with the same simulation dataset.}
\label{tab:res-Sim1TrsGraphInfer}

\begin{small}
\begin{tabular}{|c|ccc|ccc|ccc|ccc|}
\hline
                             & \multicolumn{3}{c|}{Frequent pattern}                                                            & \multicolumn{3}{c|}{Causal Rule}                                                                 & \multicolumn{3}{c|}{PC algorithm}                                                            & \multicolumn{3}{c|}{Proposed method}                                                             \\ \cline{2-13} 
\multirow{-2}{*}{Sim} & \multicolumn{1}{c|}{Prec} & \multicolumn{1}{c|}{Rec} & F1                                & \multicolumn{1}{c|}{Prec} & \multicolumn{1}{c|}{Rec} & F1                                & \multicolumn{1}{c|}{Prec} & \multicolumn{1}{c|}{Rec} & F1                                & \multicolumn{1}{c|}{Prec} & \multicolumn{1}{c|}{Rec} & F1                                \\ \hline
$p=0.50$                     & \multicolumn{1}{c|}{1}         & \multicolumn{1}{c|}{1}      & {\color[HTML]{FE0000} \textbf{1}} & \multicolumn{1}{c|}{1}         & \multicolumn{1}{c|}{1}      & {\color[HTML]{FE0000} \textbf{1}} & \multicolumn{1}{c|}{1}         & \multicolumn{1}{c|}{1}      & {\color[HTML]{FE0000} \textbf{1}} & \multicolumn{1}{c|}{1}         & \multicolumn{1}{c|}{1}      & {\color[HTML]{FE0000} \textbf{1}} \\ \hline
$p=0.30$                     & \multicolumn{1}{c|}{1}         & \multicolumn{1}{c|}{1}      & {\color[HTML]{FE0000} \textbf{1}} & \multicolumn{1}{c|}{1}         & \multicolumn{1}{c|}{1}      & {\color[HTML]{FE0000} \textbf{1}} & \multicolumn{1}{c|}{0.69}      & \multicolumn{1}{c|}{1}      & 0.82                              & \multicolumn{1}{c|}{1}         & \multicolumn{1}{c|}{1}      & {\color[HTML]{FE0000} \textbf{1}} \\ \hline
$p=0.10$                     & \multicolumn{1}{c|}{1}         & \multicolumn{1}{c|}{1}      & {\color[HTML]{FE0000} \textbf{1}} & \multicolumn{1}{c|}{1}         & \multicolumn{1}{c|}{0.56}   & 0.71                              & \multicolumn{1}{c|}{0.69}      & \multicolumn{1}{c|}{1}      & 0.82                              & \multicolumn{1}{c|}{1}         & \multicolumn{1}{c|}{1}      & {\color[HTML]{FE0000} \textbf{1}} \\ \hline
$p=0.05$                     & \multicolumn{1}{c|}{1}         & \multicolumn{1}{c|}{1}      & {\color[HTML]{FE0000} \textbf{1}} & \multicolumn{1}{c|}{1}         & \multicolumn{1}{c|}{0.56}   & 0.71                              & \multicolumn{1}{c|}{0.69}      & \multicolumn{1}{c|}{1}      & 0.82                              & \multicolumn{1}{c|}{1}         & \multicolumn{1}{c|}{0.67}   & 0.8                               \\ \hline
\end{tabular}
\end{small}
\end{table}

\begin{table}[]
\caption{The result of inferring directed causal graphs by frequent pattern, Causal Rule, Bayesian Network, and proposed methods in simulation varying $p$ with $n=500$. The red color is the better results in term of F1 between two method with the same simulation dataset.}
\label{tab:res-Sim1DirGraphInfer}
\begin{small}
\begin{tabular}{|c|ccc|ccc|ccc|ccc|}
\hline
                             & \multicolumn{3}{c|}{Frequent pattern}                                                     & \multicolumn{3}{c|}{Causal Rule}                                                                 & \multicolumn{3}{c|}{PC algorithm}                                                               & \multicolumn{3}{c|}{Proposed method}                                                             \\ \cline{2-13} 
\multirow{-2}{*}{Sim} & \multicolumn{1}{c|}{Prec} & \multicolumn{1}{c|}{Rec} & F1                         & \multicolumn{1}{c|}{Prec} & \multicolumn{1}{c|}{Rec} & F1                                & \multicolumn{1}{c|}{Prec} & \multicolumn{1}{c|}{Rec} & F1                                   & \multicolumn{1}{c|}{Prec} & \multicolumn{1}{c|}{Rec} & F1                                \\ \hline
$p=0.50$                     & \multicolumn{1}{c|}{0.67}      & \multicolumn{1}{c|}{1}      & {\color[HTML]{000000} 0.8} & \multicolumn{1}{c|}{1}         & \multicolumn{1}{c|}{1}      & {\color[HTML]{FE0000} \textbf{1}} & \multicolumn{1}{c|}{1}         & \multicolumn{1}{c|}{1}      & {\color[HTML]{FE0000} \textbf{1}}    & \multicolumn{1}{c|}{1}         & \multicolumn{1}{c|}{1}      & {\color[HTML]{FE0000} \textbf{1}} \\ \hline
$p=0.30$                     & \multicolumn{1}{c|}{0.67}      & \multicolumn{1}{c|}{1}      & 0.8                        & \multicolumn{1}{c|}{1}         & \multicolumn{1}{c|}{1}      & {\color[HTML]{FE0000} \textbf{1}} & \multicolumn{1}{c|}{0.75}      & \multicolumn{1}{c|}{1}      & 0.86                                 & \multicolumn{1}{c|}{1}         & \multicolumn{1}{c|}{1}      & {\color[HTML]{FE0000} \textbf{1}} \\ \hline
$p=0.10$                     & \multicolumn{1}{c|}{0.67}      & \multicolumn{1}{c|}{1}      & {\color[HTML]{000000} 0.8} & \multicolumn{1}{c|}{1}         & \multicolumn{1}{c|}{0.67}   & 0.8                               & \multicolumn{1}{c|}{0.75}      & \multicolumn{1}{c|}{1}      & 0.86                                 & \multicolumn{1}{c|}{1}         & \multicolumn{1}{c|}{1}      & {\color[HTML]{FE0000} \textbf{1}} \\ \hline
$p=0.05$                     & \multicolumn{1}{c|}{0.67}      & \multicolumn{1}{c|}{1}      & {\color[HTML]{000000} 0.8} & \multicolumn{1}{c|}{1}         & \multicolumn{1}{c|}{0.67}   & 0.8                               & \multicolumn{1}{c|}{0.75}      & \multicolumn{1}{c|}{1}      & {\color[HTML]{FE0000} \textbf{0.86}} & \multicolumn{1}{c|}{1}         & \multicolumn{1}{c|}{0.67}   & 0.8                               \\ \hline
\end{tabular}
\end{small}

\end{table}

Results of performance of four approaches in simulation with different number of individuals $n$ are in the Fig.~\ref{fig:TrsCausalGraphRes} and~\ref{fig:DirCausalGraphRes}. For the task of interring transitive causal graph (Fig.~\ref{fig:TrsCausalGraphRes}), based on the F1 scores, the Frequent pattern approach performed the best, while the second performer was our approach. The third performer was the Causal rule method. The last one was the PC algorithm. In the high value of $n$, all approaches performed the best; the F1 score is equal to 1. However, when the $n$ decreases, only Frequent pattern approach performed well.  

In the task of inferring directed causal graph, however, the PC algorithm and Frequent pattern approach performed poorly, while our approach performed the best. When the $n$ decreases, only our approach performed well. The result in Fig.~\ref{fig:DirCausalGraphRes} is consistent with the result in Table~\ref{tab:res-Sim1DirGraphInfer}.

Fig.~\ref{fig:CausalGraphRes} illustrates the results of inferring directed causal graphs from four methods. The proposed method (Fig.~\ref{fig:CausalGraphRes} D.) inferred the correct directed causal graph. The Frequent pattern method (Fig.~\ref{fig:CausalGraphRes} B.) inferred a causal graph that cannot distinguish between directed and indirected causal relations. For example,  in Eq.~\ref{eq:x6}, $X_6$ is directly caused by $X_1,X_4$ and indirectly caused by $X_2,X_3$ (Eq.~\ref{eq:x1}) and $X_2,X_5$ (Eq.~\ref{eq:x4}). However, in Fig.~\ref{fig:CausalGraphRes} B., all types of causal relation appear in the graph inferred by the frequent pattern method. In Fig.~\ref{fig:CausalGraphRes} A. and C., the inferred directed causal graphs of Causal rule method and PC algorithm are shown. Both methods were able to   distinguish between directed and indirected causal relations. Nevertheless, the Causal Rule missed two causal relations: $X_3$ causes $X_1$ and $X_5$ causes $X_4$, while the PC algorithm had false-positive edges between $X_7$ and $X_9$ in both directions.

\begin{figure}[ht!]
\includegraphics[width=1\columnwidth]{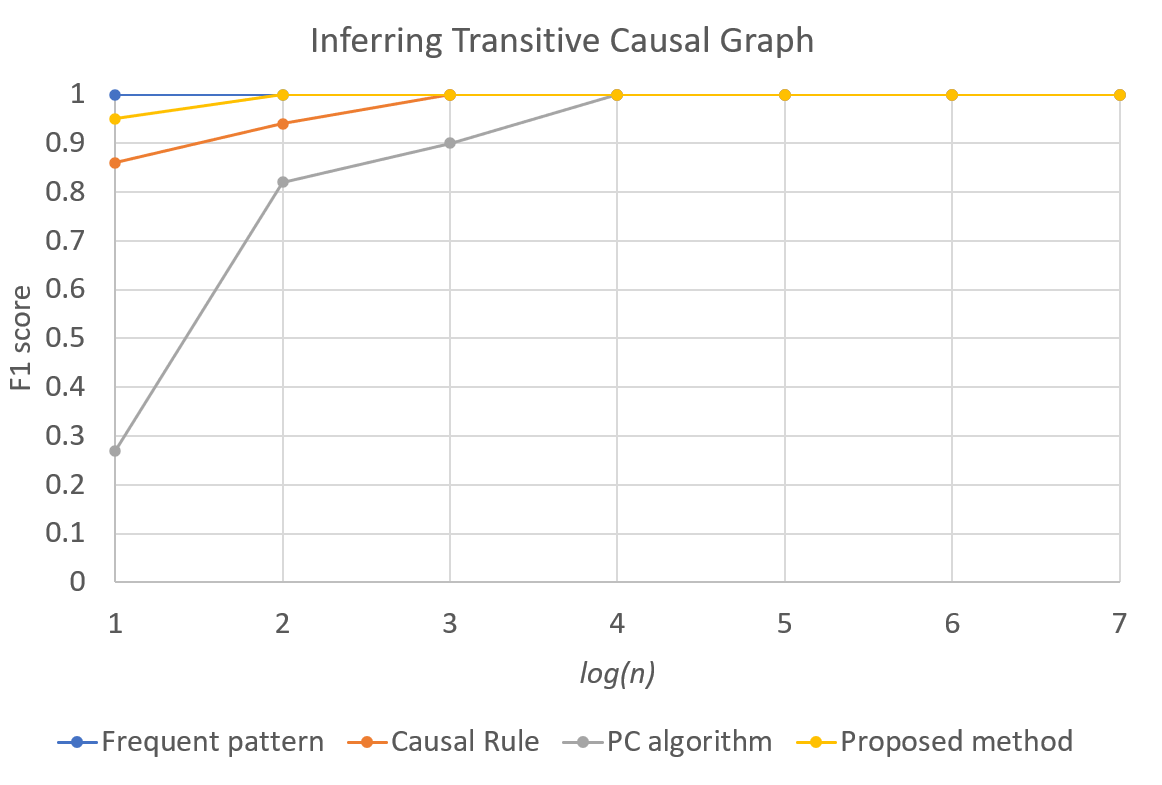}
\caption{The result of inferring transitive causal graphs by frequent pattern , causal rule, PC algorithm, and proposed methods varying the number of individuals $n$ (in the horizontal axis is in the $log(n)$ form) with $p=0.3$.}
\label{fig:TrsCausalGraphRes}
\end{figure}

\begin{figure}[ht!]
\includegraphics[width=1\columnwidth]{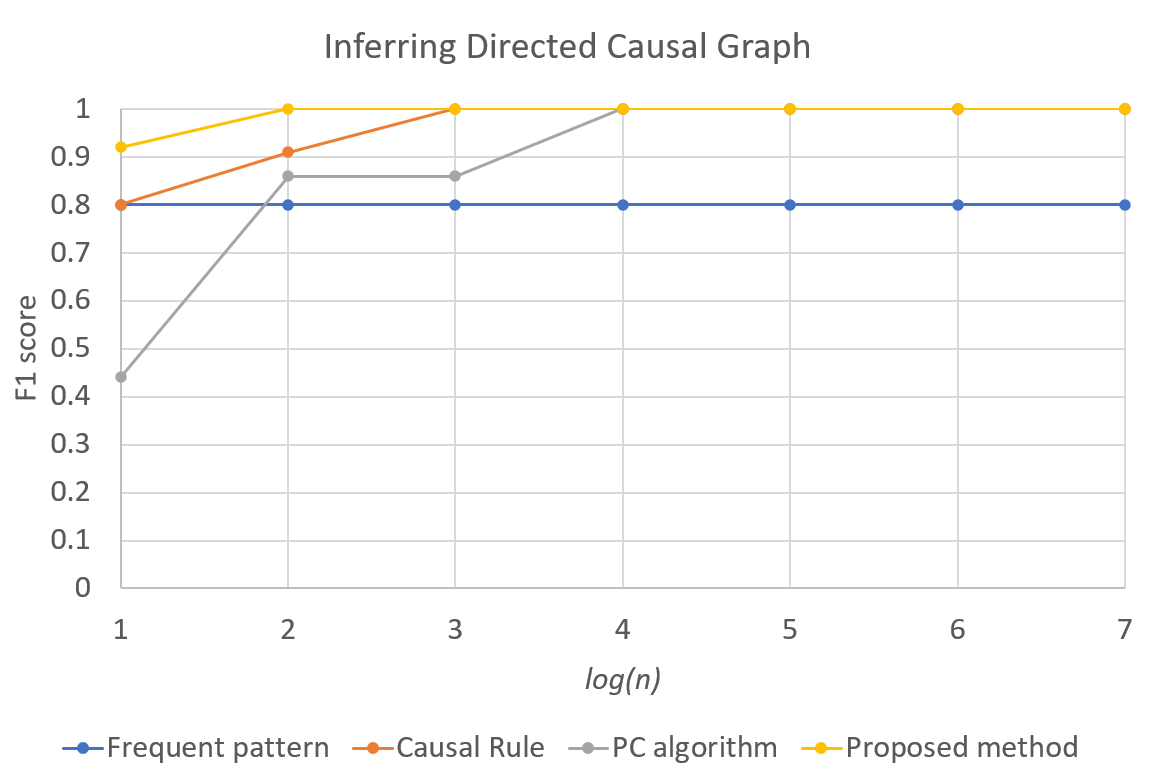}
\caption{The result of inferring directed causal graphs by frequent pattern , causal rule, PC algorithm, and proposed methods varying the number of individuals $n$ (in the horizontal axis is in the $log(n)$ form) with $p=0.3$.}
\label{fig:DirCausalGraphRes}
\end{figure}

These results indicate that the Frequent pattern is a proper method for the task of interring transitive causal graphs, which is simpler than the task of inferring directed causal graphs. In contrast, our proposed approach is more appropriate for the  task of inferring directed causal graphs. 

Hence, if the task is about inferring directed causal relations, our approach should be used in binary data.

\subsection{Case studies: Twin births of the United States}
\label{sec:twinres}
\begin{ColorPar}{HLcolor}
Given $W$ is a variable of status of twin birth weights (one if the weight of either child below 1000 grams and zero otherwise), and $Y$ is a variable of twin mortality status (one if both children are dead and zero otherwise) along with other parent's risk-factor variables, the result of causal inference of the proposed framework is below. Briefly, 1) the dependency of $W,Y$ was report that it existed, 2) the association direction of $W,Y$ was reported that $W,Y$ was positively correlated, and 3) the causal relation $Y\xrightarrow{}W$ was found in the dataset. 
\end{ColorPar}

 In the aspect of dependency, only the causal relation of birth weight and the mortality of twins exists. There is a dependency between $W$ and $Y$. The $95$th percentile confidence interval of the degree of dependency $ \hat{f}(W \nindep Y )$ in Eq.~\ref{eq:indpApprox} is $[0.018, 0.020]$. The Mann-Whitney test reject the $H_0$ that $ \hat{f}(W \nindep Y )=0$ with the significance threshold at 0.05, which implies there exists a dependency between $W$ and $Y$.

 In the aspect of correlation direction, the Mann-Whitney test reject the $H_0$ that $\mathrm{oddDiff}(W,Y)=0$ with the significance threshold at 0.05.  The $95$th percentile confidence interval of $\mathrm{oddDiff}(W,Y)$ in Eq.~\ref{eq:oddDiff} is $[0.019, 0.021]$, which implies a positive association.

In the aspect of causal relation, the Mann-Whitney test reject the $H_0$ that $\mathrm{causalDir}(W,Y)=0$ with the significance threshold at 0.05. The $95$th percentile confidence interval of $\mathrm{causalDir}(W,Y)$ in Eq.~\ref{eq:CD} is $[0.523, 0.552]$, which implies $Y\xrightarrow{}W$. 
 
 Lastly, in the aspect of degree of causal direction, the mean of $\hat{P}(W=1|Y=1)$ is $0.94$ and the $95$th percentile confidence interval of $\hat{P}(W=1|Y=1)$ is $[0.926, 0.950]$. Additionally, assuming $W,Y$ do not have any confounding factor outside the dataset, since $Y$ has no parent, $\hat{P}(W=1|Y=1)=P(W=1|do(Y=1) )$~\cite{peters2017elements} where $P(W|do(Y=y))$ represents an intervention distribution of $W$ intervening by fixing $Y=y$. Hence, $Y$ causes $W$. 
 
 It implies that almost all mortality in twins had issues of low birth weights, but not all low-birth-weight twins were died. No other risk variables have strong causal relations. This result is consistent with the work in~\cite{almond2005costs} that the low-birth-weight issue has smaller effect on twin mortality than previous belief; it is not a sole cause of birth mortality. While the low-birth-weight issue plays a key role in twin mortality, other confounding factors (e.g. genetic) might contribute significant effect on twin mortality~\cite{almond2005costs}. 

\subsection{Case studies: Thailand poverty surveys}
\label{sec:ThaiPovertyRes}
\begin{ColorPar}{HLcolor}
In this section, the Thailand poverty surveys were used to find causal relations among 31 MPI indicators from two provinces:  Khon Kaen province and Chiang Mai province.

In the aspect of dependency, briefly, among 31 MPI indicators, only dependency of smoking cigarette and drinking alcohol was found.

In Khon Kaen province, there is a sole dependency between smoking cigarette $X_{25}$ and drinking alcohol $X_{24}$. The $95$th percentile confidence interval of $ \hat{f}(X_{24} \nindep X_{25} )$ in Eq.~\ref{eq:indpApprox} is $[0.092, 0.094]$. The Mann-Whitney test reject the $H_0$ that $ \hat{f}(X_{24} \nindep X_{25} )=0$ with the significance threshold at 0.05.  There is no evidence of causation between them.

In Chiang Mai province, on the other hand,  there is a sole dependency between smoking cigarette and drinking alcohol but the result shows that smoking cigarette might cause drinking alcohol. The Mann-Whitney test reject the $H_0$ that $ \hat{f}(X_{24} \nindep X_{25} )=0$ with the significance threshold at 0.05.  The $95$th percentile confidence interval of $ \hat{f}(X_{24} \nindep X_{25} )$ in Eq.~\ref{eq:indpApprox} is $[0.059, 0.061]$.

In the aspect of correlation direction, the Mann-Whitney test reject the $H_0$ that $\mathrm{oddDiff}(X_{24},X_{25})=0$ with the significance threshold at 0.05.  The $95$th percentile confidence interval of $\mathrm{oddDiff}(X_{24},X_{25})$ in Eq.~\ref{eq:oddDiff} is $[0.060, 0.062]$, which implies a positive association. 

In the aspect of causal direction of Chiang Mai province,  the Mann-Whitney test reject the $H_0$ that $\mathrm{causalDir}(X_{25},X_{24})=0$ with the significance threshold at 0.05. The $95$th percentile confidence interval of $\mathrm{causalDir}(X_{25},X_{24})$ in Eq.~\ref{eq:CD} is $[0.254, 0.262]$, which implies $X_{25}\xrightarrow{}X_{24}$. 

Lastly, in the aspect of degree of causal direction, the mean of $\hat{P}(X_{24}=1|X_{25}=1)$ is $0.73$ and the $95$th percentile confidence interval of $\hat{P}(X_{24}=1|X_{25}=1)$ is $[0.725 0.733]$. Additionally, assuming $X_{24},X_{25}$ do not have any confounding factor outside the dataset, since $X_{25}$ has no parent, $\hat{P}(X_{24}=1|X_{25}=1)=P(X_{24}=1|do(X_{25}=1) )$~\cite{peters2017elements} where $P(X_{24}|do(X_{25}=1) )$ represents an intervention distribution of $X_{24}$ intervening by fixing $X_{25}=1$.

This implies smoker trends to drink alcohol but not vise versa in Chiang Mai province. 

The MPI of Khon Kaen is 0.018 while the MPI of Chiang Mai is 0.024. This implies Chiang Mai has a higher degree of poverty than Khon Kaen's. According to the result of Chiang Mai province, since smoking causes drink alcohol, by alleviating the smoking issue, the alcoholic consumption issue might be alleviated, which makes MPI index decreases. For Khon Kaen province, since  there is a dependency between smoking and alcohol drinking but no causal relation, there might be confounding factors of both variables that were unable to be measured and existed outside the dataset. Hence, in Khon Kaen, by alleviating either issue of smoking or alcohol drinking, it might not alleviate another issue.  

In literature, it is not surprised that smoking associates with drinking alcohol~\cite{10.1093/abm/16.3.203,marsh2016association}.  However, due to the nature of results from exploratory data analysis, the smoking and drinking alcohol causal relation in this study can be considered as a guideline of possible causal relation and it is needed to be validated in an experimental study.
\end{ColorPar}

\section{Discussion and limitation}
\begin{ColorPar}{HLcolor}
In the simulation, our proposed method performed well compared against several baseline approaches.

Briefly, the results indicated that the Frequent pattern is a proper method for the task of interring transitive causal graphs, which is simpler than the task of inferring directed causal graphs, while our proposed approach is more appropriate for the  task of inferring directed causal graphs. If the task is about inferring directed causal relations, our approach should be used in binary data.

Frequent pattern method infers causal relations by only using patterns of pairs of variables either being active together or being the opposite in data without any mechanism to check confounding factors or checking the robustness of inferred relations.  Even though the method is simple, it performed well in the task of inferring transitive causal graphs.   On the other hand, other methods that are more sophisticated performed slightly poorly compared against the Frequent pattern method. This implies that there is no need for complicated mechanism to detect transitive causal relations.

Since the task of inferring directed causal graphs is more challenging than the task of inferring transitive causal graphs, it is no wonder that the simple method like the Frequent pattern was unable to perform well in this task. To detect a direct causal relation, it requires that we have to know whether there are any confounding factors between two variables that are correlated. If it is a case, then, two variables might not have causal relation; they are just associated via their confounding factors. This is why our method equips the confounding-checker mechanism (Algorithm~\ref{algo:alg1} line 7-16). Additionally, estimation statistics supports the robustness of inferring any kinds of relations. PC algorithm, Causal Rule, and our proposed method have confounding-checker mechanism. However, only our method utilizes estimation statistics to enhance the robustness of our statistical inference. By utilizing both confounding-checker mechanism and estimation statistics, our method performed the best in this task.

In the twin of the USA dataset, the results indicated that almost all mortality in twins had issues of low birth weights, but not all low-birth-weight twins were died, which is consistent with the work in~\cite{almond2005costs} that the low-birth-weight issue has smaller effect on twin mortality than previous belief. It is not a sole cause of birth mortality. While the low-birth-weight issue plays a key role in twin mortality, other confounding factors (e.g. genetic) might contribute significant effect on twin mortality~\cite{almond2005costs}. 

In  the Thailand poverty surveys, the results indicated that, among 31 MPI indicators, there was only a dependency of smoking cigarette indicator and drinking alcohol indicator in both Khon Kaen and Chiang Mai provinces, which is consistent with the literature that smoking associates with drinking alcohol~\cite{10.1093/abm/16.3.203,marsh2016association}. Only a causal relation of the smoking causes a drinking alcohol issue was found in Chiang Mai province. The existence of dependency of smoking and alcohol consumption without its causal relation in Khon Kaen might imply that there were confounding factors of both MPI indicators existed outside the dataset. 

For Chiang province, the policy makers might attempt to  de-couple both issues by expanding smoke-free areas around places that sell alcohols (e.g. bars, pubs, restaurants)~\cite{marsh2016association}. By not allowing smoking in public areas, among moderate-and-heavy-drinking smokers, the smoke-free policy was associated with the reducing of drinking behavior in pubs~\cite{mckee2009longitudinal}. By solving the smoking issue, the drinking alcohol issue might also be alleviated in Chiang Mai, which results in reducing of MPI index.

In term of limitation, causal relations inferred by this work are not the real causal relations. They are empirical causal relations that needed to be validated and incorporated to support policy making process. We also made many assumptions to make it possible to infer causal relations, which might not be true in some situations. See Section~\ref{sec:assumption} for more details of related assumptions in causal inference that we made. Hence, our main goal of this research is to develop an exploratory data analysis tool to pinpoint possible causal relations to support researchers before the validation in the field studies to find real causal relations.
\end{ColorPar}

\section{Conclusion}
MPI is a well-known poverty measure that covers multidimensional aspects of poverty beyond monetary. MPI index requires binary MPI indicators that represent different aspects of poverty in order to compute its value. While focusing on each MPI indicator might reduce MPI index, however, solving a specific MPI indicator might lead to changing other MPI indicators or even causing MPI index increases. Moreover, there is no consensus regarding how to infer causal relations among binary indicators. 

In this work, we proposed an exploratory-data-analysis framework for finding possible causal relations among factors that contribute to poverty from similar data sources that are used in MPI analysis. By combining causal graph and MPI, not only we know how severe the issue of poverty is, but we also know the causal relations among poverty factors, which can help us to target the right issues to solve poverty effectively. 

We evaluated the proposed framework with several baseline approaches in simulation datasets varying degree of noise and number of data points. Our framework performed better than baselines (Frequent pattern and Causal rule methods) in most cases. 

 The first case study of Twin births of the United State revealed that almost all mortality cases in twins had issues of low birth weights but not all low-birth-weight twins were died.  The second case study revealed that smoking was associated with drinking alcohol in both provinces. While there was no causal relation in Khon Kaen province, there was a causal relation of smoking causes drinking alcohol in Chiang Mai province. 
 
 Note that the causal relations inferred by this work are not the real causal relations; they are empirical causal relations that needed to be validated. Our main goal is to develop an exploratory data analysis tool to pinpoint possible causal relations to support researchers before the validation in the field studies to find real causal relations.

 The framework can be applied beyond the poverty context. Lastly, the framework in this work has already been implemented in R programming language~\cite{Rcran} in a form of R package ``BiCausality"~\cite{SharedLink}. The official link for BiCausality at the Comprehensive R Archive Network (CRAN) can be found at \href{https://cran.r-project.org/package=BiCausality}{https://cran.r-project.org/package=BiCausality}.  

\section*{Acknowledgment}

The authors would like to thank the National Electronics and Computer Technology Center (NECTEC), Thailand, to provide our resources in order to successfully finish this work.

This paper was supported in part by the Thai People Map and Analytics Platform (TPMAP), a joint project between the office of National Economic and Social Development Council (NESDC) and the National Electronic and Computer Technology Center (NECTEC), National Science and Technology Development Agency (NSTDA), Thailand.

\section*{Author contribution statement:}
Chainarong Amornbunchornvej: Conceived and designed the experiments; Performed the experiments; Analyzed and interpreted the data; Wrote the paper.

Navaporn Surasvadi: Conceived and designed the experiments; Analyzed and interpreted the data; Wrote the paper

Anon Plangprasopchok: Analyzed and interpreted the data; Contributed reagents, materials, analysis tools or data; Wrote the paper.

Suttipong Thajchayapong: Analyzed and interpreted the data; Wrote the paper

\section*{Data availability statement:}    Data will be made available on request.

\section*{Declaration of interest’s statement:}    The authors declare no conflict of interest.

\appendix

\section{Problem formalization and properties}
\label{sec:probfandprop}
 To increase readability of notations, in a directed graph, we use $v\xrightarrow{}u$ to represent that there is a directed edge from $v$ to $u$, and  $v\xleftarrow{}u$ for a directed edge from $u$ to $v$. We also use $v\xrightarrow{}u$ to represents $v$ causes $u$ in a causal graph. We also write $X \indep Y$ if $X,Y$ are independent.
 
\subsection{Assumptions of Causal inference}
\label{sec:assumption}
\begin{ColorPar}{HLcolor}
In causal inference, if a real experiment such as Randomised Control Trial (RCT) is not performed, it is almost impossible to discover any casual relation without making assumptions regarding a causal mechanism. In this section, we introduce three assumptions we assumed that allow us to learn a causal structure from data~\cite{scheines1997introduction}: 1) the Causal Markov assumption, 2) the Faithfulness assumption, and 3)  the Causal Sufficiency assumption. 

\subsubsection{Causal Markov assumption}
The Causal Markov assumption is used in Structural causal model to make it possible to infer causal relations from data~\cite{bollen1989structural}. Before stating the assumption, the $d$-separation~\cite{pearl1988probabilistic} concept is required. The $d$-separation is a relation between three sets of variables: cause, effect, and $d$-separated set. Intuitively, suppose there are multiple variables, which have $X$ is a cause and $Y$ is an effect of $X$ via other variables ($X\xrightarrow{}Z\xrightarrow{}\dots\xrightarrow{}Y$), there are $Z$ that blocks all path from $X$ to $Y$. We called $X$ and $Y$ are $d$-separated by $Z$. The crucial property is that if $Z$ $d$-separates $X$ and $Y$, then $X,Y$ are independent given $Z$. This makes the $d$-separation concept links the concept of causal graph via causal paths and  the concept of probability distribution via statistical dependencies between variables together~\cite{scheines1997introduction}.   In Fig.~\ref{fig:dsep}, $A$ causes $B$ and $B$ causes $C$. We have $B$ that can block a path between $A$ and $C$, hence, $A,C$ are $d$-separated by $B$. This implies $A,C$ are independent conditioning on $B$. The $d$-separation relation can be defined below.

\begin{definition}[$d$-separation~\cite{peters2017elements}]
Given a directed acyclic graph (DAG) $G=(V,E)$. A path $P$ between nodes $u,v \in V$ is blocked  by a set $S\subseteq V\setminus \{u,v\}$ if there exists $w_1,w_2,w_3\in P$ s.t. the following conditions hold:
\numsquishlist
\item $w_2 \in S$ and $w_1\xrightarrow{}w_2\xrightarrow{}w_3$ or $w_1\xleftarrow{}w_2\xleftarrow{}w_3$ or $w_1\xleftarrow{}w_2\xrightarrow{}w_3$.
\item neither $w_2$ nor any of its descendant nodes is in $S$ and   $w_1\xrightarrow{}w_2\xleftarrow{}w_3$. 
Suppose $A,B,S$ are disjoint subsets of $V$, we say that $A$ and $B$ are $d$-separated by $S$ if all path between nodes in $A$ and $B$ are blocked by $S$; all paths between $A,B$ must pass through some nodes in $S$. 
\numsquishend
\end{definition}
 Now, we are ready to state the Causal Markov assumption.

\begin{definition}[Causal Markov]
\label{def:CM}
Given an SCM model $\mathfrak{C}=(\mathbf{S},P_N)$ with a directed graph $G=(V,E)$ where $V=\{X_1,\dots,X_d\}$ is a set of variable nodes and $E$ is a set of causal directions. For any variable $X \in V$, $X$ is independent of all other variables conditioning (given) all its directed causes except $X$'s effects. 
\end{definition}

In other words, from the Def.~\ref{def:CM}, every variable $X$ and all other variables are $d$-separated by $X$'s directed causes or $X$'s parent nodes in a causal graph.  

\begin{figure}[ht!]
\begin{center}
\includegraphics[width=.5\columnwidth]{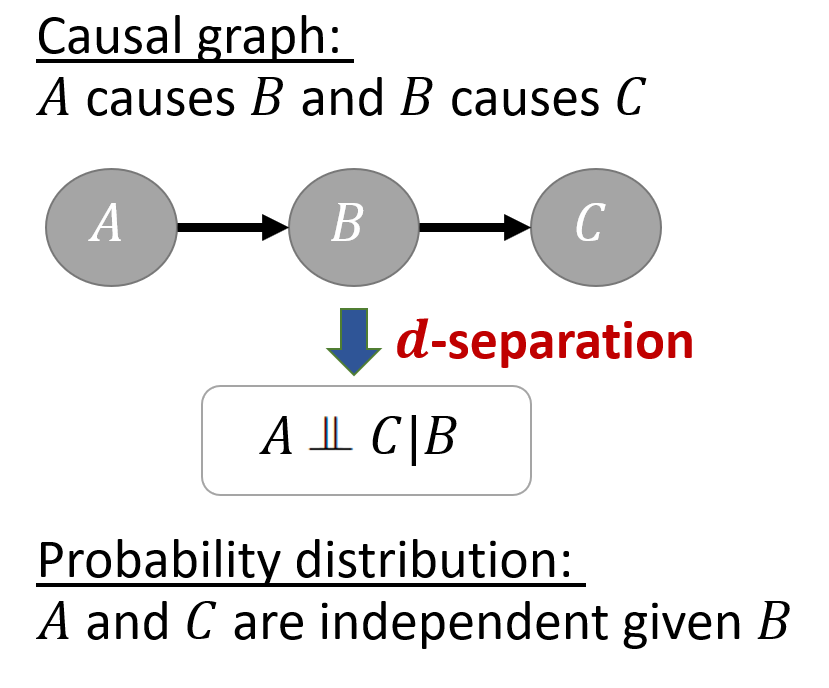}
\caption{A relationship between a causal graph and a probability distribution in the aspect of $d$-separation. $A,C$ are able to be $d$-separated by $B$ and are independent from each other.}
\label{fig:dsep}
\end{center}
\end{figure}

This assumption guarantees that only directed causes are all we need to know to understand behaviors of an effect variable.
\subsubsection{Faithfulness assumption}
In the Causal Markov assumption, all pairs of variables are independent given their $d$-separation set that blocks all paths between them. However, there might be some independence relations that are not the results of $d$-separation. To guarantee that we can find all causal relations in a causal graph from data, we need  Faithfulness assumption. 

\begin{definition}[Causal Faithfulness]
\label{def:CF}
Given an SCM model $\mathfrak{C}=(\mathbf{S},P_N)$ with a directed graph $G=(V,E)$ where $V=\{X_1,\dots,X_d\}$ is a set of variable nodes and $E$ is a set of causal directions. Only independence relations inferred by $d$-separation in $G$ exists in the probability distribution over $V$.
\end{definition}

This assumption guarantees that all independence relations can be found from data by $d$-separation. This  eliminate independence relations that might occur by chance that are not complied with the structure of a causal graph.

\subsubsection{Causal Sufficiency assumption}
For the Causal Markov assumption, we guarantees that $d$-separation can be used to find independence relation. It links the connection between a causal graph $G$ and a probability distribution. For Faithfulness assumption, only independence relations found by $d$-separation exists, which implies that there are no any other independence relations that are not complied with a causal graph $G$.  The Causal sufficiency links the connection between a causal graph $G$ and what we can measure and exists in data.

\begin{definition}[Causal Sufficiency]
\label{def:CF}
Given an SCM model $\mathfrak{C}=(\mathbf{S},P_N)$ with a directed graph $G=(V,E)$ where $V=\{X_1,\dots,X_d\}$ is a set of variable nodes and $E$ is a set of causal directions.  For any variable $X \in V$, all variables of $X$'s directed causes are measured and exists in data.
\end{definition}
Even though this assumption is strong for many cases, without real experience, the Causal Sufficiency is a crucial one that makes it possible to infer causal relations from data. Without this assumption, there are many possible causal graphs that exist and conflict with a given causal graph that can yield the similar result of statistical inference.
\end{ColorPar}

\subsection{Problem formalization}
\label{sec:probf}

\begin{definition}[Structural Causal Model (SCM)~\cite{peters2017elements}] 
\label{def:SCM}
Given an SCM model $\mathfrak{C}=(\mathbf{S},P_N)$ with a directed graph $G=(V,E)$ where $V=\{X_1,\dots,X_d\}$ is a set of variable nodes and $E$ is a set of causal directions. A set $\mathbf{S}$  consists of $d$ equations (Eq.~\ref{eq:A1}), which are defined below:
\begin{equation}
\label{eq:A1}
    X_j=f_j(\mathbf{PA}_j,N_j),\qquad j=1,\dots,d.
\end{equation}
Where $\mathbf{PA}_j \subseteq V\setminus \{X_j\}$  is a set of parents of $X_j$ in $G$ s.t. $\forall X_i \in \mathbf{PA}_j, (X_i,X_j) \in E$ or $X_i$ causes $X_j$ denoted $X_i \xrightarrow{} X_j$. $P_N=P_{N_1,\dots,N_d}$ is a joint distribution over the noise variables $N_1,\dots,N_d$ where all noise variables  are independent from each other: $\forall N_i,N_j, N_i\indep N_j$.
\end{definition}

\begin{definition}[Bernoulli Structural Causal Model (b-SCM)] 
\label{def:bSCM}
Given an SCM model $\mathfrak{C}=(\mathbf{S},P_N)$ with a directed graph $G=(V,E)$. $\mathfrak{C}$ is a b-SCM if for all function $f_j$ in $\mathbf{S}$, $f_j:\{0,1\}^{|\mathbf{PA}_j|+1}\xrightarrow{}\{0,1\}$, and all  noise variables $N_1,\dots,N_d$ are binary random variables from Bernoulli distributions $\mathcal{B}_1,\dots,\mathcal{B}_d$.

\begin{equation}
\label{eq:A2}
    X_j=f_j(\mathbf{PA}_j,N_j)= \big(\bigvee_{X_i \in \mathbf{PA}_j} (c_{i,j} \wedge  X_i)\vee ( (1-c_{i,j}) \wedge  \neg{X}_i) \big)\vee N_j
\end{equation}

 Where $j=1,\dots,d, N_j\sim \mathcal{B}_j$, as well as $\wedge$ and $\vee$ are ``AND" and ``OR" operators respectively. The $c_{i,j}$ is a binary parameter with  $c_{i,j}=1$  if $X_i$ has a positive causation relation with $X_j$. In contrast, $c_{i,j}=0$ if  $X_i$ has a negative causation relation with $X_j$.   If $\mathbf{PA}_j = \emptyset$, then $X_j=N_j$. Note that, in Eq.~\ref{eq:A2}, $X_i$ is not a cause of $X_j$ if and only if $X_i \notin \mathbf{PA}_j$.
\end{definition}

The Def~\ref{def:bSCM} represents a case when one of many root causes can impact the effect significantly. For example, in poverty, only one of many factors, such as lack of education, disability, lack of accessing health care can harm poor people significantly. This inspires us to use Def~\ref{def:bSCM} in this work.

Suppose we have a dataset $\mathcal{D}=\{\vec{d}_1,\dots,\vec{d}_n\}$ that was generated from b-SCM $\mathfrak{C}$ where $\vec{d}_i=(x_{i,1},\dots,x_{i,d})$ is an $i$th vector of realizations of random variables $X_1,\dots,X_d$ in $\mathfrak{C}$. However, the equations in $\mathbf{S}$ of $\mathfrak{C}$ is unknown to us. In this work, we are interested in finding both directed and indirected causes of any variable $X_j$. Hence, we define a transitive causal graph to represent this idea.

\begin{definition}[Transitive causal graph]
Given an SCM model $\mathfrak{C}=(\mathbf{S},P_N)$ with a directed graph $G=(V,E)$ where $V=\{X_1,\dots,X_d\}$ is a set of variable nodes and $E$ is a set of causal directions. A graph $\hat{G}=(V,\hat{E})$ is a transitive causal graph s.t. $(X_i,X_j) \in \hat{E}$ if there exists any directed path in $G$ from $X_i$ to $X_j$
\end{definition}

Assuming that there is no confounding factors outside variables in $\mathfrak{C}$, we can formalize the following problem for inferring the transitive causal graph $\hat{G}$ of $\mathfrak{C}$.

\begin{problem}
    \SetKwInOut{Input}{Input}
    \SetKwInOut{Output}{Output}
    \Input{A dataset  $\mathcal{D}=\{\vec{d}_1,\dots,\vec{d}_n\}$ generated from unknown b-SCM $\mathfrak{C}$}
    \Output{A transitive causal graph $\hat{G}=(V,\hat{E})$ of $\mathfrak{C}$ }
    \caption{{\btcgip}}
	\label{prob1}
\end{problem}

\subsection{b-SCM and causal direction}

Given a dataset $\mathcal{D}=\{\vec{d}_1,\dots,\vec{d}_n\}$ that was generated from b-SCM $\mathfrak{C}$ where $\vec{d}_i=(x_{i,1},\dots,x_{i,d})$ is an $i$th vector of realizations of binary random variables $X_1,\dots,X_d$ in $\mathfrak{C}$. Assuming that the causal graph $G$ of  $\mathfrak{C}$ is a directed acyclic graph (DAG). 

\begin{principle}[Reichenbach's common cause principle~\cite{peters2017elements}]
\label{princ:Reichenbach}
Given two random variables $X,Y$. If both variables are statistically dependent: $X \nindep Y$, then there exists the third variable $Z$ that causes both variables ($Z$ can be $X$, $Y$ or other variables). Additionally, $X,Y$ are independent given $Z$ :$X \indep Y | Z$.
\end{principle}









\begin{definition}[Positive association]
Given $X,Y$ as binary random variables, $X$ and $Y$ have positive association if $X \nindep Y$ and $P(X=1,Y=1)P(X=0,Y=0)>P(X=0,Y=1)P(X=1,Y=0)$. 
\end{definition}

\begin{proposition}
\label{prop:pxly}
Given a b-SCM model $\mathfrak{C}$ with a directed graph $G=(V,E)$, and $X,Y \in V$. Assuming that the noise variables $N_1,\dots,N_d$ are i.i.d. with the probability $p_N<1$ of being 1. Assuming that $X,Y$ have a positive association. If $X\xrightarrow{}Y$ , then $P(X=1)< P(Y=1)$. 
\end{proposition}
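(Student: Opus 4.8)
The plan is to exploit the ``OR-with-noise'' form of the structural equations in Definition~\ref{def:bSCM}: under a positive direct effect $Y$ pointwise dominates $X$, and the slack between them is controlled from below by the independent noise $N_Y$. Reading $X\xrightarrow{}Y$ as $X\in\mathbf{PA}_Y$, I would first fix the sign of the direct effect. If the direct effect were negative, i.e. $c_{X,Y}=0$, then by Eq.~\ref{eq:A2} the contribution of $X$ to $Y$ is the literal $\neg X$, so every realization with $X=0$ forces $Y=1$; hence $P(X=0,Y=0)=0$, which makes the positive-association inequality $P(X=1,Y=1)P(X=0,Y=0)>P(X=0,Y=1)P(X=1,Y=0)$ impossible. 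So positive association forces $c_{X,Y}=1$. The same inequality also gives $P(X=0)>0$, since $P(X=0)=0$ would make both of its sides vanish.

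Next I would decompose $Y$. With $c_{X,Y}=1$, collect the remaining parent contributions together with the noise into $W:=\big(\bigvee_{X_i\in\mathbf{PA}_Y\setminus\{X\}}(c_{i,Y}\wedge X_i)\vee((1-c_{i,Y})\wedge\neg X_i)\big)\vee N_Y$, so that $Y=W\vee X$. Because $X=1$ forces $Y=1$, we have $P(Y=1,X=1)=P(X=1)$, and therefore $P(Y=1)-P(X=1)=P(Y=1,X=0)=P(W=1,X=0)$. It remains to show this last quantity is strictly positive, and here the key fact is $N_Y\indep X$: since $G$ is a DAG and $X\xrightarrow{}Y$, the node $Y$ is a descendant of $X$, so $X$ is a deterministic function of the noise variables attached to $X$ and its ancestors, a set that does not contain $N_Y$; mutual independence of the noise (Definition~\ref{def:SCM}) then yields $N_Y\indep X$. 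Since $N_Y=1$ implies $W=1$, we get $\{N_Y=1,X=0\}\subseteq\{W=1,X=0\}$ and hence $P(W=1,X=0)\ge P(N_Y=1,X=0)=p_N\,P(X=0)$. Combining with $P(X=0)>0$ from the first paragraph gives $P(Y=1)-P(X=1)\ge p_N\,P(X=0)>0$, as required.

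I expect the main obstacle to be the two auxiliary reductions rather than the final estimate: deducing $c_{X,Y}=1$ purely from the positive-association hypothesis, and arguing $N_Y\indep X$ rigorously from the DAG structure (acyclicity is essential here, so that $N_Y$ cannot feed back into $X$). I would also flag the role of $p_N$: the strict inequality needs $p_N>0$, because when $Y$ has $X$ as its only parent the slack $W$ reduces to $N_Y$ and the bound $p_N\,P(X=0)$ collapses to $0$ at $p_N=0$; I would therefore read the hypothesis as $0<p_N<1$, with $p_N<1$ preventing the complementary degeneracy in which every variable is identically $1$ and no association exists.
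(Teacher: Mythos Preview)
Your proof is correct and takes a genuinely different route from the paper. The paper computes $P(Y=1)$ via the explicit closed form $P(Y=0)=(1-p_N)\prod_{X_i\in\mathbf{PA}_Y}P(X_i=0)$, then subtracts $P(X=1)=1-P(X=0)$ algebraically and concludes from $p_N<1$ together with $\prod_{X_i\in\mathbf{PA}_Y\setminus\{X\}}P(X_i=0)\le 1$. You instead argue by pointwise domination: once $c_{X,Y}=1$ is secured, $Y=X\vee W$ gives $P(Y=1)-P(X=1)=P(W=1,X=0)$, which you bound below by $p_N\,P(X=0)$ using only $N_Y\indep X$. This buys you robustness: the paper's product formula tacitly assumes that all parents of $Y$ are mutually independent and all enter with positive sign $c_{i,Y}=1$, neither of which is guaranteed in a general b-SCM DAG, whereas your bound needs only the single independence $N_Y\indep X$, which genuinely follows from acyclicity as you argue. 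Your explicit derivations of $c_{X,Y}=1$ and $P(X=0)>0$ from the positive-association inequality also fill in steps the paper leaves implicit. Both arguments share the same hidden reliance on $p_N>0$ for strictness; your flagging of this is accurate and matches the paper's unstated need for it in the step ``$P(X=0)>(1-p_N)P(X=0)$''.
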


\begin{proof}

\begin{equation*}
    P(Y=1)=   1-(1-p_N)\prod_{X_i \in \mathbf{PA}_Y}(1-P(X_i=1) ) ,
\end{equation*}
\begin{equation*}
    P(Y=1)=  1- (1-p_N) P(X=0) \prod_{X_i \in \mathbf{PA}_Y\setminus\{X\}} P(X_i=0).
\end{equation*}

\begin{equation*}
    P(Y=1)-P(X=1) =   1- (1-p_N) P(X=0) \prod_{X_i \in \mathbf{PA}_Y\setminus\{X\}} P(X_i=0) - (1- P(X=0) )
\end{equation*}
\begin{equation*}
    P(Y=1)-P(X=1) =   P(X=0) - (1-p_N) P(X=0) \prod_{X_i \in \mathbf{PA}_Y\setminus\{X\}} P(X_i=0). 
\end{equation*}

Since $p_N<1$, $P(X=0)>(1-p_N) P(X=0)$ and $\prod_{X_i \in \mathbf{PA}_Y\setminus\{X\}} P(X_i=0)\leq 1$, it follows that $P(X=0) > (1-p_N) P(X=0) \prod_{X_i \in \mathbf{PA}_Y\setminus\{X\}} P(X_i=0)$

Hence, we have $P(Y=1)-P(X=1)>0$.
\end{proof}

\begin{proposition}[b-SCM Causal-direction criterion]
\label{prop:causalDir}
Given a b-SCM model $\mathfrak{C}$ with a directed graph $G=(V,E)$, and $X,Y \in V$. Assuming that the noise variables $N_1,\dots,N_d$ are i.i.d. with the probability $p_N<1$ of being 1.
Suppose $X\nindep Y |Z$ for any $Z\in V \setminus\{X,Y\}$ s.t. $Z \nindep X$ and $ Z\nindep Y$, then $X\xrightarrow{}Y$ if and only if $P(Y=1|X=1)> P(X=1|Y=1)$.
\end{proposition}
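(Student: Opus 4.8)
The plan is to reduce the stated biconditional to a comparison of the two marginals $P(X=1)$ and $P(Y=1)$, and then to invoke Proposition~\ref{prop:pxly} together with Principle~\ref{princ:Reichenbach}. The first step is purely algebraic: writing $P(Y=1\mid X=1)=P(X=1,Y=1)/P(X=1)$ and $P(X=1\mid Y=1)=P(X=1,Y=1)/P(Y=1)$, I note that the two conditional probabilities share the same numerator $P(X=1,Y=1)$. Hence, provided $P(X=1,Y=1)>0$, the inequality $P(Y=1\mid X=1)>P(X=1\mid Y=1)$ holds if and only if $P(X=1)<P(Y=1)$. This turns the claim into the equivalent statement: $X\xrightarrow{}Y$ if and only if $P(X=1)<P(Y=1)$.

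Next I would use the hypothesis to pin down the causal dichotomy. Because $X\nindep Y$ and no third variable $Z$ (with $Z\nindep X$ and $Z\nindep Y$) screens off $X$ and $Y$, Principle~\ref{princ:Reichenbach} forces the common cause of $X$ and $Y$ to be one of the two variables themselves; since $G$ is a DAG, exactly one of $X\xrightarrow{}Y$ or $Y\xrightarrow{}X$ holds. Assuming without loss of generality that $X$ and $Y$ have a positive association (the negative case reduces to this one via the relabeling in Algorithm~\ref{algo:alg1}, lines 17--18), I can apply Proposition~\ref{prop:pxly} to each orientation: if $X\xrightarrow{}Y$ then $P(X=1)<P(Y=1)$, and symmetrically, exchanging the roles of $X$ and $Y$, if $Y\xrightarrow{}X$ then $P(Y=1)<P(X=1)$.

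Combining these yields both directions. For the ``only if'' implication, $X\xrightarrow{}Y$ gives $P(X=1)<P(Y=1)$, hence $P(Y=1\mid X=1)>P(X=1\mid Y=1)$ by the algebraic step. For the converse, suppose $P(Y=1\mid X=1)>P(X=1\mid Y=1)$, equivalently $P(X=1)<P(Y=1)$; were it instead $Y\xrightarrow{}X$, Proposition~\ref{prop:pxly} would force $P(Y=1)<P(X=1)$, a contradiction, so the dichotomy leaves only $X\xrightarrow{}Y$.

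The main obstacle I anticipate is bookkeeping around the hypotheses rather than a deep difficulty. I must justify the reduction to the positive-association case and confirm that $P(X=1,Y=1)>0$, so that the division in the algebraic step is legitimate; the latter follows because $X\nindep Y$ forces both marginals to be non-degenerate, $0<P(X=1)<1$ and $0<P(Y=1)<1$, after which positive association gives $P(X=1,Y=1)>0$. I should also make explicit that Proposition~\ref{prop:pxly} applies verbatim to the orientation $Y\xrightarrow{}X$ with the roles of $X$ and $Y$ interchanged, and that the \emph{strict} conclusion $P(X=1)<P(Y=1)$ it provides is precisely what makes the target comparison strict, matching the strict inequality in the statement.
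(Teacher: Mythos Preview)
Your proposal is correct and uses the same core ingredients as the paper: Proposition~\ref{prop:pxly} for the marginal comparison and Principle~\ref{princ:Reichenbach} (together with the no-screening-off hypothesis) to reduce to the dichotomy $X\xrightarrow{}Y$ versus $Y\xrightarrow{}X$. The backward direction in particular is essentially identical to the paper's case analysis.

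There is one organizational difference worth flagging. In the forward direction the paper does not pass through the algebraic equivalence $P(Y=1\mid X=1)>P(X=1\mid Y=1)\Leftrightarrow P(X=1)<P(Y=1)$; instead it appeals directly to the b-SCM structural equation (Eq.~\eqref{eq:A2}) to obtain the sharper fact $P(Y=1\mid X=1)=1$, and then uses Bayes plus Proposition~\ref{prop:pxly} to bound $P(X=1\mid Y=1)<1$. Your route is a bit more symmetric and avoids touching the OR-structure of $f_Y$ explicitly, relying only on what Proposition~\ref{prop:pxly} already packages; the paper's route buys the extra information that the larger conditional equals $1$ exactly, which is not needed for the proposition but explains why the gap in Eq.~\eqref{eq:CD} can be large in practice. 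Either organization is fine.
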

\begin{proof}
Initially, assuming that $X,Y$ and other variables have positive associations if they have causal relations (e.g. $X =1$ causes $Y=1$).\\
In the forward direction, suppose $X\xrightarrow{}Y$. 
The b-SCM equation of  $X\xrightarrow{}Y$ is in the following form:

\begin{equation*}
    Y= \big(\bigvee_{X_i \in \mathbf{PA}_Y} (c_{i,Y} \wedge  X_i)\vee ( (1-c_{i,Y}) \wedge  \neg{X}_i)\big)\vee N_Y
\end{equation*}
where $X \in \mathbf{PA}_Y$, $c_{X,Y}=1$, and $N_Y$ is the noise parameter of $Y$ s.t. $N_Y \indep X$. By setting $X=1$, we have $Y=1$.

Hence, $P(Y=1|X=1)=1$.

For $P(X=1|Y=1)$, we have the following equation.

\begin{equation*}
   P(X=1|Y=1) = \frac{P(X=1,Y=1)}{P(Y=1)} = \frac{P(Y=1|X=1)P(X=1)}{P(Y=1)} = \frac{P(X=1)}{P(Y=1)}.
\end{equation*}

Due to the fact that $X\xrightarrow{}Y$, the probability $P(Y=1)> P(X=1)$ by Proposition~\ref{prop:pxly}. This makes  $\frac{P(X=1)}{P(Y=1)}<1$. Hence,
\begin{equation*}
   P(X=1|Y=1) = \frac{P(X=1)}{P(Y=1)} < P(Y=1|X=1) = 1.
\end{equation*}

In the backward direction, suppose $P(Y=1|X=1)> P(X=1|Y=1)$. According to the Reichenbach's Common Cause Principle, if  $X\nindep Y$, there are three possible relations between $X,Y$: 1) $X\xrightarrow{}Y$, 2) $Y\xrightarrow{}X$, and 3) $X,Y$ have the same confounding variable $Z$ s.t. $Z\xrightarrow{}Y$, $Z\xrightarrow{}X$, and $X \indep Y | Z$. We show that 2) and 3) have some cases that contradict the assumption $P(Y=1|X=1)> P(X=1|Y=1)$.

Case 1: assuming that $Y\xrightarrow{}X$, we have $P(X=1|Y=1)=1$, and $P(Y=1|X=1)= \frac{P(Y=1)}{P(X=1)}$ (see the forward direction for more details). Similar to the forward direction with the different causal direction, we have that  $\frac{P(Y=1)}{P(X=1)}< 1$ by Proposition~\ref{prop:pxly}.  
This case implies $P(Y=1|X=1)< P(X=1|Y=1)$ and it establishes the contradiction with the given assumption $P(Y=1|X=1)> P(X=1|Y=1)$. 

Case 2: since $X\nindep Y |Z$ for any $Z\in V \setminus\{X,Y\}$ s.t. $Z \nindep X$ and $ Z\nindep Y$, $X$, there is no $Z'\in V\setminus\{X,Y\}$ s.t.  $X \indep Y | Z'$. 

Since 2) and 3) relations in Reichenbach's Common Cause Principle are not possible given $P(Y=1|X=1)> P(X=1|Y=1)$, only 1) is left, which is $X\xrightarrow{}Y$. 

Therefore, $X\xrightarrow{}Y$ if and only if $P(Y=1|X=1)> P(X=1|Y=1)$.

For the case that $X,Y$ have a negative association ($X =0$ causes $Y=1$), it is obvious that the above proof still valid if we replace $X =1$ with $X =0$.
\end{proof}

\begin{theorem}
\label{thm:algcorrect}
Given $\mathcal{D}=\{\vec{d}_1,\dots,\vec{d}_n\}$ that was generated from b-SCM $\mathfrak{C}$. Assuming that the noise variables of $\mathfrak{C}$, $N_1,\dots,N_d$ are i.i.d. with the probability $p_N<1$ of being 1. Algorithm~\ref{algo:alg1} is a solution of Problem~\ref{prob1}, which is able to infer the transitive causal graph $\hat{G}$ of $\mathfrak{C}$.
\end{theorem}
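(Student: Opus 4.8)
The plan is to verify correctness phase by phase, tracking the edge sets $E_0,E_1,E_2,\hat E$ built by Algorithm~\ref{algo:alg1} and showing that each statistical query run on $\mathcal{D}$ corresponds to the intended graphical statement about the causal DAG $G$ of $\mathfrak{C}$. The bridge between the statistical and the graphical side is supplied by the three standing assumptions: the Causal Markov assumption (Def.~\ref{def:CM}) together with the Faithfulness assumption make conditional independence in the distribution coincide with $d$-separation in $G$ in both directions, while the Causal Sufficiency assumption guarantees that every common cause we ever need to condition on is one of the measured variables $X_1,\dots,X_d$. I would treat the independence oracle of Section~\ref{sec:Hesti} as returning the population verdict, so that the argument isolates the logical structure of the procedure from sampling error, and I would use the hypothesis $p_N<1$ wherever Propositions~\ref{prop:pxly} and~\ref{prop:causalDir} are invoked.

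First I would prove the Phase~1 invariant (lines 1--6): $E_0$ contains $(X_i,X_j)$ and $(X_j,X_i)$ exactly when $X_i\nindep X_j$, which by Faithfulness is exactly when $X_i,X_j$ are $d$-connected by the empty set. Then I would analyse the confounder filter of Phase~2 (lines 7--16). The dichotomy here is the one furnished by Reichenbach's principle (Principle~\ref{princ:Reichenbach}): a dependent pair is either joined by a direct causal link or is made independent by conditioning on a common cause. The guard on line 8 enumerates the candidate separators $Z$ associated with both endpoints, and lines 9--14 send the pair to $E_1$ (line 12) when conditioning on \emph{every} such $Z$ leaves it dependent and to $E_2$ (line 14) otherwise; the no-candidate branch (lines 15--16) also sends the pair to $E_1$. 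The crucial observation is that the resulting $E_1$-membership condition --- that $X_i\nindep X_j\mid Z$ for all $Z$ with $Z\nindep X_i$ and $Z\nindep X_j$ --- is \emph{verbatim} the hypothesis of Proposition~\ref{prop:causalDir}, so a pair reaches Phase~3 precisely when that proposition applies.

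Next I would treat the orientation in Phase~3 (lines 17--26). Line 18 reads the sign of the association off the odds inequality (the defining inequality of positive association), and line 19 reduces the negative-association case to the positive one by the bit-flip already justified at the end of the proof of Proposition~\ref{prop:causalDir}. For a positively associated pair the orientation is then decided by comparing $P(X_j=1\mid X_i=1)$ with $P(X_i=1\mid X_j=1)$ (line 20), and Proposition~\ref{prop:causalDir} guarantees this comparison recovers the true direction, with the tie case correctly deferred to $E_2$. Composing the three invariants gives that $(X_i,X_j)\in\hat E$ exactly when $X_i$ and $X_j$ are causally linked with orientation $X_i\xrightarrow{}X_j$; Proposition~\ref{prop:pxly}, which forces $P(\cdot=1)$ to increase strictly along every directed path, then rules out contradictory or cyclic orientations and confirms that $\hat E$ is a consistent set of oriented links of the DAG.

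The step I expect to be the main obstacle is reconciling this output with the \emph{transitive} graph demanded by Problem~\ref{prob1}. The filter on line 10 conditions on a single variable $Z$ and discards a pair as soon as one such $Z$ renders it independent; because of the value-absorbing, partly deterministic nature of the OR mechanism in Eq.~\ref{eq:A2} (for instance, conditioning on an intermediate variable can force a descendant to a constant and thereby create an independence that $d$-separation does not predict), this filter can remove mediated ancestor--descendant pairs and not merely pure-confounding pairs. The honest way to close the gap is a two-stage argument: first show that $\hat E$ recovers exactly the \emph{direct} edges of $G$ with correct orientation, and then obtain the transitive causal graph as the transitive closure of $\hat E$, using that $G$ is a DAG so its reachability relation is the transitive closure of its edge relation. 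Making the first stage airtight is the delicate part: one must verify, specifically for the OR-with-noise form of Eq.~\ref{eq:A2} rather than for a generic faithful distribution, that no single-variable separator exists for a truly adjacent pair while some measured separator does exist for every non-adjacent dependent pair --- in effect confirming that the Faithfulness assumption is compatible with the b-SCM class and that single-variable conditioning suffices. This compatibility is where I would concentrate the bulk of the proof effort.
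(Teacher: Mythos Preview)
Your phase-by-phase decomposition and the appeals to Principle~\ref{princ:Reichenbach} and Proposition~\ref{prop:causalDir} mirror the paper's own argument closely: the paper also walks through lines 2--6, 7--16, and 17--26, arguing at each stage that the statistical test matches the intended causal statement, and concludes that $(X_j,X_i)\in\hat E$ iff $X_j$ causes $X_i$. The paper additionally appends a short ``backward direction'' paragraph showing uniqueness (any two transitive causal graphs of the same $\mathfrak{C}$ must coincide), which you omit but which is immediate.

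The substantive divergence is precisely the obstacle you flag. The paper's proof simply \emph{asserts} that whenever $X_j$ (transitively) causes $X_i$ ``there is no $Z$ s.t.\ $X_j\indep X_i\mid Z$,'' so that every ancestor--descendant pair survives Phase~2 and lands in $\hat E$; the mediator case $X_j\to Z\to X_i$ is never discussed. Your route is different: you argue that $\hat E$ captures only the \emph{direct} edges of $G$ and then recover the transitive graph as the transitive closure of $\hat E$. This is the more defensible reading of what the algorithm does (and it matches the paper's own experiments, which treat the raw output as the directed causal graph and score the transitive task separately), but note that Algorithm~\ref{algo:alg1} as written returns $\hat E$ itself, not its closure, so your two-stage argument establishes a slightly different statement than the theorem literally claims. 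If you want to align with the paper's proof, you would instead need to justify the paper's bare assertion that no single measured $Z$ separates an ancestor--descendant pair in a b-SCM --- exactly the step you correctly singled out as the delicate one.
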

\begin{proof}
In the forward direction, given $\mathcal{D}=\{\vec{d}_1,\dots,\vec{d}_n\}$, we show that Algorithm~\ref{algo:alg1} provides $\hat{G}$ of  $\mathfrak{C}$.

In line 2-6, Algorithm~\ref{algo:alg1} infers pairs of variables that are statistically dependent and keeps them into $E_0$. Then, in line 7-16, for any pair $(X_i,X_j) \in E_0$, the algorithm infers whether $X_i \nindep X_j |Z$ for any $Z\notin\{X_i,X_j\}$.  If there is no $Z$ s.t. $X_i\indep X_j |Z$, then, by Principle~\ref{princ:Reichenbach}, there are only two possibilities: either $X_i$ causes $X_j$ or $X_j$ causes $X_i$. Hence, according to Proposition~\ref{prop:causalDir}, we can check either $X_i$ causes $X_j$ or $X_j$ causes $X_i$ by using the conditional probability statement: $P(X_j=1|X_i=1)>P(X_i=1|X_j=1)$.

In line 17-26, Algorithm~\ref{algo:alg1} infers causal directions. For any pair $(X_i,X_j)$, if $X_j$ causes $X_i$, then 1) there is no $Z$ s.t. $X_j\indep X_i |Z$ and 2)  $P(X_i=1|X_j=1)>P(X_j=1|X_i=1)$, which makes $(X_j,X_i)\in \hat{E}$ in the line 24.  In contrast, if $X_j$ is not a cause of $X_i$, then either 1) there is some $Z$ s.t. $X_j\indep X_i |Z$ or  2) there is no $Z$ s.t. $X_j\indep X_i |Z$ but $P(X_i=1|X_j=1)\leq P(X_j=1|X_i=1)$. In both 1) and 2), $(X_j,X_i)\notin \hat{E}$ in the line 24.  Hence, Algorithm~\ref{algo:alg1} provides the exact transitive causal graph. \\

In the backward direction, given any $\hat{G}=(V,\hat{E})$ of  $\mathfrak{C}$, we show that $\hat{G}$ is inferred by Algorithm~\ref{algo:alg1}. Suppose there is a transitive causal graph $\hat{G}'=(V,\hat{E}')$ of $\mathfrak{C}$ s.t. $\hat{G}\neq \hat{G}'$ and Algorithm~\ref{algo:alg1} cannot infer $\hat{G}'$. Assuming that the variables of two graph are the same. There is only one possibility: $\hat{E}\neq\hat{E}'$. Suppose $(X_i,X_j)\in \hat{E}$ but $(X_i,X_j)\notin \hat{E}'$. This means $X_i$ causes $X_j$ in $\hat{G}$ but $X_i$ is not a cause of $X_j$ in $\hat{G}'$. However, this is impossible since both $\hat{G},\hat{G}'$ are transitive causal graphs of the same $\mathfrak{C}$; if $(X_i,X_j)\in \hat{E}$, then $(X_i,X_j)$ must be in $ \hat{E}'$.
Hence, it establishes the contradiction that $\hat{G} = \hat{G}'$ and the Algorithm~\ref{algo:alg1} provides the unique solution. 
\end{proof}



\bibliographystyle{elsarticle-num}

\begin{thebibliography}{10}
\expandafter\ifx\csname url\endcsname\relax
  \def\url#1{\texttt{#1}}\fi
\expandafter\ifx\csname urlprefix\endcsname\relax\def\urlprefix{URL }\fi
\expandafter\ifx\csname href\endcsname\relax
  \def\href#1#2{#2} \def\path#1{#1}\fi

\bibitem{su14052497}
N.~Bachmann, S.~Tripathi, M.~Brunner, H.~Jodlbauer,
  \href{https://www.mdpi.com/2071-1050/14/5/2497}{The contribution of
  data-driven technologies in achieving the sustainable development goals},
  Sustainability 14~(5) (2022).
\newblock \href {https://doi.org/10.3390/su14052497}
  {\path{doi:10.3390/su14052497}}.
\newline\urlprefix\url{https://www.mdpi.com/2071-1050/14/5/2497}

\bibitem{alkire2010multidimensional}
S.~Alkire, M.~E. Santos, Multidimensional poverty index 2010: research
  briefing, OPHI Briefing (2010).

\bibitem{alkire2021global}
S.~Alkire, U.~Kanagaratnam, N.~Suppa, The global multidimensional poverty index
  (mpi) 2021, OPHI MPI Methodological Note 51 (2021).

\bibitem{alkire2010acute}
S.~Alkire, M.~E. Santos, Acute multidimensional poverty: A new index for
  developing countries, United Nations Development Programme Human Development
  Report Office Background Paper No. 2010/11 (2010).

\bibitem{amornbunchornvej2021identifying}
C.~Amornbunchornvej, N.~Surasvadi, A.~Plangprasopchok, S.~Thajchayapong,
  Identifying linear models in multi-resolution population data using minimum
  description length principle to predict household income, ACM Transactions on
  Knowledge Discovery from Data (TKDD) 15~(2) (2021) 1--30.

\bibitem{doi:10.1080/08913811.2012.684474}
T.~Sanandaji, \href{https://doi.org/10.1080/08913811.2012.684474}{Poverty and
  causality}, Critical Review 24~(1) (2012) 51--59.
\newblock \href
  {http://arxiv.org/abs/https://doi.org/10.1080/08913811.2012.684474}
  {\path{arXiv:https://doi.org/10.1080/08913811.2012.684474}}, \href
  {https://doi.org/10.1080/08913811.2012.684474}
  {\path{doi:10.1080/08913811.2012.684474}}.
\newline\urlprefix\url{https://doi.org/10.1080/08913811.2012.684474}

\bibitem{alkire2021examining}
S.~Alkire, C.~Oldiges, U.~Kanagaratnam, Examining multidimensional poverty
  reduction in india 2005/6--2015/16: Insights and oversights of the headcount
  ratio, World Development 142 (2021) 105454.

\bibitem{rogan2016gender}
M.~Rogan, Gender and multidimensional poverty in south africa: Applying the
  global multidimensional poverty index (mpi), Social Indicators Research 126
  (2016) 987--1006.

\bibitem{wang2022differences}
B.~Wang, Q.~Luo, G.~Chen, Z.~Zhang, P.~Jin, Differences and dynamics of
  multidimensional poverty in rural china from multiple perspectives analysis,
  Journal of Geographical Sciences 32~(7) (2022) 1383--1404.

\bibitem{barati2022multidimensional}
A.~A. Barati, M.~Zhoolideh, M.~Moradi, E.~Sohrabi~Mollayousef, C.~F{\"u}rst,
  Multidimensional poverty and livelihood strategies in rural iran,
  Environment, Development and Sustainability 24~(11) (2022) 12963--12993.

\bibitem{pinilla2018reality}
M.~Pinilla-Roncancio, The reality of disability: Multidimensional poverty of
  people with disability and their families in latin america, Disability and
  health journal 11~(3) (2018) 398--404.

\bibitem{doi:10.1080/19371910903070440}
J.~K. PhD, B.-M. Yang, T.-J. Lee, E.~Kang,
  \href{https://doi.org/10.1080/19371910903070440}{A causality between health
  and poverty: An empirical analysis and policy implications in the korean
  society}, Social Work in Public Health 25~(2) (2010) 210--222, pMID:
  20391262.
\newblock \href
  {http://arxiv.org/abs/https://doi.org/10.1080/19371910903070440}
  {\path{arXiv:https://doi.org/10.1080/19371910903070440}}, \href
  {https://doi.org/10.1080/19371910903070440}
  {\path{doi:10.1080/19371910903070440}}.
\newline\urlprefix\url{https://doi.org/10.1080/19371910903070440}

\bibitem{ridley2020poverty}
M.~Ridley, G.~Rao, F.~Schilbach, V.~Patel, Poverty, depression, and anxiety:
  Causal evidence and mechanisms, Science 370~(6522) (2020) eaay0214.

\bibitem{ZHANG201447}
H.~Zhang,
  \href{https://www.sciencedirect.com/science/article/pii/S0738059314000431}{The
  poverty trap of education: Education–poverty connections in western china},
  International Journal of Educational Development 38 (2014) 47--58.
\newblock \href
  {https://doi.org/https://doi.org/10.1016/j.ijedudev.2014.05.003}
  {\path{doi:https://doi.org/10.1016/j.ijedudev.2014.05.003}}.
\newline\urlprefix\url{https://www.sciencedirect.com/science/article/pii/S0738059314000431}

\bibitem{su13031038}
A.~Ullah, Z.~Kui, S.~Ullah, C.~Pinglu, S.~Khan,
  \href{https://www.mdpi.com/2071-1050/13/3/1038}{Sustainable utilization of
  financial and institutional resources in reducing income inequality and
  poverty}, Sustainability 13~(3) (2021).
\newblock \href {https://doi.org/10.3390/su13031038}
  {\path{doi:10.3390/su13031038}}.
\newline\urlprefix\url{https://www.mdpi.com/2071-1050/13/3/1038}

\bibitem{grueso2022unveiling}
H.~Grueso, Unveiling the causal mechanisms within multidimensional poverty,
  Evaluation Review (2022) 0193841X221140936.

\bibitem{alkire2015multidimensional}
S.~Alkire, J.~M. Roche, P.~Ballon, J.~Foster, M.~E. Santos, S.~Seth,
  Multidimensional poverty measurement and analysis, Oxford University Press,
  USA, 2015.

\bibitem{dotter2017multidimensional}
C.~Dotter, S.~Klasen, The multidimensional poverty index: Achievements,
  conceptual and empirical issues, Tech. rep., Discussion Papers (2017).

\bibitem{alkire2019dynamics}
S.~Alkire, Y.~Fang, Dynamics of multidimensional poverty and uni-dimensional
  income poverty: An evidence of stability analysis from china, Social
  Indicators Research 142 (2019) 25--64.

\bibitem{hassani2019big}
H.~Hassani, M.~R. Yeganegi, C.~Beneki, S.~Unger, M.~Moradghaffari, Big data and
  energy poverty alleviation, Big Data and Cognitive Computing 3~(4) (2019) 50.

\bibitem{10.1145/3441452}
C.~Amornbunchornvej, E.~Zheleva, T.~Berger-Wolf,
  \href{https://doi.org/10.1145/3441452}{Variable-lag granger causality and
  transfer entropy for time series analysis}, ACM Trans. Knowl. Discov. Data
  15~(4) (may 2021).
\newblock \href {https://doi.org/10.1145/3441452} {\path{doi:10.1145/3441452}}.
\newline\urlprefix\url{https://doi.org/10.1145/3441452}

\bibitem{KUANG2020253}
K.~Kuang, L.~Li, Z.~Geng, L.~Xu, K.~Zhang, B.~Liao, H.~Huang, P.~Ding, W.~Miao,
  Z.~Jiang,
  \href{https://www.sciencedirect.com/science/article/pii/S2095809919305235}{Causal
  inference}, Engineering 6~(3) (2020) 253--263.
\newblock \href {https://doi.org/https://doi.org/10.1016/j.eng.2019.08.016}
  {\path{doi:https://doi.org/10.1016/j.eng.2019.08.016}}.
\newline\urlprefix\url{https://www.sciencedirect.com/science/article/pii/S2095809919305235}

\bibitem{10.1145/2783258.2785466}
S.~Athey, \href{https://doi.org/10.1145/2783258.2785466}{Machine learning and
  causal inference for policy evaluation}, in: Proceedings of the 21th ACM
  SIGKDD International Conference on Knowledge Discovery and Data Mining, KDD
  '15, Association for Computing Machinery, New York, NY, USA, 2015, p. 5–6.
\newblock \href {https://doi.org/10.1145/2783258.2785466}
  {\path{doi:10.1145/2783258.2785466}}.
\newline\urlprefix\url{https://doi.org/10.1145/2783258.2785466}

\bibitem{10.1145/2746410}
J.~Li, T.~D. Le, L.~Liu, J.~Liu, Z.~Jin, B.~Sun, S.~Ma,
  \href{https://doi.org/10.1145/2746410}{From observational studies to causal
  rule mining}, ACM Trans. Intell. Syst. Technol. 7~(2) (nov 2015).
\newblock \href {https://doi.org/10.1145/2746410} {\path{doi:10.1145/2746410}}.
\newline\urlprefix\url{https://doi.org/10.1145/2746410}

\bibitem{morgan2015counterfactuals}
S.~L. Morgan, C.~Winship, Counterfactuals and causal inference, Cambridge
  University Press, 2015.

\bibitem{pearl2009causal}
J.~Pearl, Causal inference in statistics: An overview, Statistics surveys 3
  (2009) 96--146.

\bibitem{pearl1985bayesian}
J.~Pearl, Bayesian netwcrks: A model cf self-activated memory for evidential
  reasoning, in: Proceedings of the 7th conference of the Cognitive Science
  Society, University of California, Irvine, Irvine, CA, USA, 1985, pp. 15--17.

\bibitem{scutari2010learning}
M.~Scutari, Learning bayesian networks with the bnlearn r package, Journal of
  Statistical Software 35~(3) (2010).

\bibitem{scutari2017bayesian}
M.~Scutari, Bayesian network constraint-based structure learning algorithms:
  Parallel and optimized implementations in the bnlearn r package, Journal of
  Statistical Software 77 (2017) 1--20.

\bibitem{colombo2014order}
D.~Colombo, M.~H. Maathuis, et~al., Order-independent constraint-based causal
  structure learning., J. Mach. Learn. Res. 15~(1) (2014) 3741--3782.

\bibitem{Rcran}
{R Core Team}, R: A Language and Environment for Statistical Computing, R
  Foundation for Statistical Computing, Vienna, Austria (2022).

\bibitem{EDOIF}
C.~Amornbunchornvej, N.~Surasvadi, A.~Plangprasopchok, S.~Thajchayapong, A
  nonparametric framework for inferring orders of categorical data from
  category-real pairs, Heliyon 6~(11) (2020) e05435.

\bibitem{sims2007urban}
M.~Sims, T.~L. Sims, M.~A. Bruce, Urban poverty and infant mortality rate
  disparities., Journal of the National Medical Association 99~(4) (2007) 349.

\bibitem{louizos2017causal}
C.~Louizos, U.~Shalit, J.~M. Mooij, D.~Sontag, R.~Zemel, M.~Welling, Causal
  effect inference with deep latent-variable models, Advances in neural
  information processing systems 30 (2017).

\bibitem{guo2018survey}
R.~Guo, L.~Cheng, J.~Li, P.~R. Hahn, H.~Liu, A survey of learning causality
  with data: Problems and methods, arXiv preprint arXiv:1809.09337 (2018).

\bibitem{Agrawal:1993:MAR:170035.170072}
R.~Agrawal, T.~Imieli\'{n}ski, A.~Swami,
  \href{http://doi.acm.org/10.1145/170035.170072}{Mining association rules
  between sets of items in large databases}, in: Proceedings of the 1993 ACM
  SIGMOD International Conference on Management of Data, SIGMOD '93, ACM, New
  York, NY, USA, 1993, pp. 207--216.
\newblock \href {https://doi.org/10.1145/170035.170072}
  {\path{doi:10.1145/170035.170072}}.
\newline\urlprefix\url{http://doi.acm.org/10.1145/170035.170072}

\bibitem{Han2007}
J.~Han, H.~Cheng, D.~Xin, X.~Yan,
  \href{https://doi.org/10.1007/s10618-006-0059-1}{Frequent pattern mining:
  current status and future directions}, Data Mining and Knowledge Discovery
  15~(1) (2007) 55--86.
\newblock \href {https://doi.org/10.1007/s10618-006-0059-1}
  {\path{doi:10.1007/s10618-006-0059-1}}.
\newline\urlprefix\url{https://doi.org/10.1007/s10618-006-0059-1}

\bibitem{aggarwal2014frequent}
C.~C. Aggarwal, J.~Han, Frequent pattern mining, Springer, 2014.

\bibitem{athreya1987bootstrap}
K.~Athreya, et~al., Bootstrap of the mean in the infinite variance case, The
  Annals of Statistics 15~(2) (1987) 724--731.

\bibitem{bickel1981some}
P.~J. Bickel, D.~A. Freedman, et~al., Some asymptotic theory for the bootstrap,
  The annals of statistics 9~(6) (1981) 1196--1217.

\bibitem{ellis2010essential}
P.~D. Ellis, The essential guide to effect sizes: Statistical power,
  meta-analysis, and the interpretation of research results, Cambridge
  University Press, Cambridge, UK, 2010.

\bibitem{cohen1995earth}
J.~Cohen, The earth is round (p<. 05): Rejoinder., American Psychologist
  50~(12) (1995) 1103.

\bibitem{halsey2015fickle}
L.~G. Halsey, D.~Curran-Everett, S.~L. Vowler, G.~B. Drummond, The fickle p
  value generates irreproducible results, Nature methods 12~(3) (2015) 179.

\bibitem{cumming2013understanding}
G.~Cumming, Understanding the new statistics: Effect sizes, confidence
  intervals, and meta-analysis, Routledge, NY, USA, 2013.

\bibitem{claridge2016estimation}
A.~Claridge-Chang, P.~N. Assam, Estimation statistics should replace
  significance testing, Nature methods 13~(2) (2016) 108.

\bibitem{ho2019moving}
J.~Ho, T.~Tumkaya, S.~Aryal, H.~Choi, A.~Claridge-Chang,
  \href{https://doi.org/10.1038/s41592-019-0470-3}{Moving beyond p values: data
  analysis with estimation graphics}, Nature Methods 16~(7) (2019) 565--566.
\newblock \href {https://doi.org/10.1038/s41592-019-0470-3}
  {\path{doi:10.1038/s41592-019-0470-3}}.
\newline\urlprefix\url{https://doi.org/10.1038/s41592-019-0470-3}

\bibitem{mann1947}
H.~B. Mann, D.~R. Whitney, \href{https://doi.org/10.1214/aoms/1177730491}{On a
  test of whether one of two random variables is stochastically larger than the
  other}, Ann. Math. Statist. 18~(1) (1947) 50--60.
\newblock \href {https://doi.org/10.1214/aoms/1177730491}
  {\path{doi:10.1214/aoms/1177730491}}.
\newline\urlprefix\url{https://doi.org/10.1214/aoms/1177730491}

\bibitem{peters2017elements}
J.~Peters, D.~Janzing, B.~Sch{\"o}lkopf, Elements of causal inference:
  foundations and learning algorithms, MIT press, MA, USA, 2017.

\bibitem{almond2005costs}
D.~Almond, K.~Y. Chay, D.~S. Lee, The costs of low birth weight, The Quarterly
  Journal of Economics 120~(3) (2005) 1031--1083.

\bibitem{10.1093/abm/16.3.203}
S.~Shiftman, L.~A. Fischer, J.~A. Paty, M.~Gnys, M.~Hickcox, J.~D. Kassel,
  \href{https://doi.org/10.1093/abm/16.3.203}{{Drinking and Smoking: a Field
  Study of their Association1}}, Annals of Behavioral Medicine 16~(3) (1994)
  203--209.
\newblock \href
  {http://arxiv.org/abs/https://academic.oup.com/abm/article-pdf/16/3/203/22867121/abm-16-3-203.pdf}
  {\path{arXiv:https://academic.oup.com/abm/article-pdf/16/3/203/22867121/abm-16-3-203.pdf}},
  \href {https://doi.org/10.1093/abm/16.3.203}
  {\path{doi:10.1093/abm/16.3.203}}.
\newline\urlprefix\url{https://doi.org/10.1093/abm/16.3.203}

\bibitem{marsh2016association}
L.~Marsh, K.~Cousins, A.~Gray, K.~Kypri, J.~Connor, J.~Hoek, The association of
  smoking with drinking pattern may provide opportunities to reduce smoking
  among students, K{\=o}tuitui: New Zealand Journal of Social Sciences Online
  11~(1) (2016) 72--81.

\bibitem{mckee2009longitudinal}
S.~A. McKee, C.~Higbee, S.~O'Malley, L.~Hassan, R.~Borland, K.~M. Cummings,
  G.~Hastings, G.~T. Fong, A.~Hyland, Longitudinal evaluation of smoke-free
  scotland on pub and home drinking behavior: findings from the international
  tobacco control policy evaluation project, Nicotine \& Tobacco Research
  11~(6) (2009) 619--626.

\bibitem{SharedLink}
C.~Amornbunchornvej, Bicausality: Binary causality inference framework in r,
  \url{https://github.com/DarkEyes/BiCausality}, accessed: 2022-05-02 (2022).

\bibitem{scheines1997introduction}
R.~Scheines, An introduction to causal inference (1997).

\bibitem{bollen1989structural}
K.~A. Bollen, Structural equations with latent variables, Vol. 210, John Wiley
  \& Sons, 1989.

\bibitem{pearl1988probabilistic}
J.~Pearl, Probabilistic reasoning in intelligent systems: networks of plausible
  inference, Morgan kaufmann, 1988.

\end{thebibliography}





\end{document}